\newtheorem{theorem}{Theorem}
\newcommand{\cRef}[1]{Ref.~\cite{#1}}
\newcommand{\cc}[1]{~\cite{#1}}
\newcommand{\av}[1]{{ \langle {#1} \rangle }}
\DeclareMathOperator*{\Tr}{Tr}
\newcommand{\orderof}[1]{O(#1)}
\newcommand{\EqDef}{\stackrel{\mathrm{def}}{=}}
\newcommand{\Id}{I}
\newcommand{\eq}[1]{Eq.~\eqref{#1}}
\newcommand{\Eq}[1]{Eq.~(\ref{#1})}
\newcommand{\Eqs}[2]{Eqs.~(\ref{#1},~\ref{#2})}
\newcommand{\App}[1]{Appendix~\ref{#1}}
\newcommand{\Table}[1]{Table~\ref{#1}}
\newcommand{\Fig}[1]{Fig.~\ref{#1}}
\newcommand{\Figss}[3]{Figs.~\ref{#1},~\ref{#2},~\ref{#3}}
\newcommand{\Sec}[1]{Sec.~\ref{#1}}
\global\long\def\norm#1{\left\Vert #1\right\Vert }
\global\long\def\ket#1{\left|#1\right\rangle }%
\global\long\def\bra#1{\left\langle #1\right|}%
\global\long\def\braket#1#2{\left\langle #1\right.\left|#2\right\rangle }%
\newcommand{\Bv}{\bm{v}}
\titleformat{\subsection}[runin]{\itshape\bfseries}{\thesubsection.}{1em}{}[.]
\titleformat{\subsubsection}[runin]{\itshape}{\thesubsubsection.}{1em}{}[.]
    \def\\{}
\begin{document}
\date{\today}
\author{Yotam Y. Lifshitz}
\email[Yotam Y. Lifshitz: ]{yotamlif@campus.technion.ac.il}
\affiliation{Physics Department, Technion, 3200003, Haifa, Israel}
\affiliation{IBM Quantum, IBM Research -- Haifa, Haifa University Campus, Mount Carmel, Haifa 31905, Israel}
\author{Eyal Bairey}
\affiliation{Physics Department, Technion, 3200003, Haifa, Israel}
\author{Eli Arbel}
\affiliation{IBM Quantum, IBM Research -- Haifa, Haifa University Campus, Mount Carmel, Haifa 31905, Israel}
\author{Gadi Aleksandrowicz}
\affiliation{IBM Quantum, IBM Research -- Haifa, Haifa University Campus, Mount Carmel, Haifa 31905, Israel}
\author{Haggai Landa}
\affiliation{IBM Quantum, IBM Research -- Haifa, Haifa University Campus, Mount Carmel, Haifa 31905, Israel}
\author{Itai Arad}
\affiliation{Physics Department, Technion, 3200003, Haifa, Israel}
\title{Practical Quantum State Tomography for Gibbs states}
\begin{abstract}
    Quantum state tomography is an essential tool for the
    characterization and verification of quantum states. However, as
    it cannot be directly applied to systems with more than a few
    qubits, efficient tomography of larger states on
    mid-sized quantum devices remains an important challenge in
    quantum computing. We develop a tomography approach that
    requires moderate computational and quantum resources for the
    tomography of states that can be approximated by Gibbs
    states of local Hamiltonians. The proposed method, Hamiltonian
    Learning Tomography, uses a Hamiltonian learning algorithm to
    get a parametrized ansatz for the Gibbs Hamiltonian, and
    optimizes it with respect to the results of local measurements.
    We demonstrate the utility of this method with a high fidelity
    reconstruction of the density matrix of 4 to 10 qubits in a
    Gibbs state of the transverse-field Ising model, in numerical
    simulations as well as in experiments on IBM Quantum
    superconducting devices accessed via the cloud. Code implementation
    of the our method is freely available as an open source software in
    Python.
\end{abstract}
\maketitle

\section{Introduction}\label{sec:introduction}

Quantum State Tomography (QST) is the process of reconstructing the
density matrix of a quantum state by measuring multiple copies of
it\cc{PhysRevA.40.2847}. QST is a useful tool for the
characterization and verification of quantum hardware, and can also
be used for understanding fundamental properties of a system such as
entanglement\cc{White1999,Leonhardt1995}. There exist several
well-established QST methods, which have been implemented on many
types of devices and setups\cc{Wang_2016, Thew_2002,
Lee_2002,Resch_2005,QiskitQST}. However, the number of measurements
required in common QST methods, such as maximum likelihood
estimation, grows exponentially with the number of qubits in the
system\cc{QuantumStateEstimation, SampleOptimalTomography},
rendering these methods infeasible for systems with more than a few
qubits. There are several approaches to reduce the number of
measurements needed for QST, either by making assumptions on the
quantum state
\cc{Torlai2018,Gross2010,Hffner2005, Titchener_2018, cramer2010efficient},
by circumventing the need to reconstruct the full density
matrix\cc{aaronson2018shadow,Huang_2020} or by making adaptive
measurement process\cc{rambach2021robust}.

An important class of quantum states in condensed matter physics and
quantum computing are Gibbs states of local Hamiltonians.
These are states $\rho$ that can be parametrized as
\begin{align}
  \rho = \frac{1}{Z}e^{-H}
  \label{Eq:Gibbs0},
\end{align}
where $Z$ is an overall
normalization factor, and $H$, called the Gibbs Hamiltonian (GH), is
(at least approximately) a $k$-local operator; namely, $H$ is a sum
of terms that act non-trivially on $k$ spatially contiguous degrees
of freedom according to the connectivity topology of some relevant
system. Gibbs states describe the equilibrium properties of
many-body quantum systems\cc{vonNeumann}, and as such are expected
to approximately describe the dynamics of systems that thermalize
due to a coupling with a bath\cc{reichental}, as well as subsystems
of closed systems that thermalize under their own
dynamics\cc{ETHoriginal, ETHreview}. They are also used as a
model (quantum Boltzmann machine) in quantum machine
learning\cc{Amin_2018}, and their preparation is an essential part
of various quantum algorithms\cc{QuantumSimulatedAnnealing,
QuantumSDP}. Thus, states that can be approximately described as
Gibbs states ubiquitously arise in quantum simulations of
many-body systems and when quantum algorithms are implemented
on noisy devices.

In this work we introduce a practical tomography approach for
approximate Gibbs states that we call \emph{Hamiltonian Learning
Tomography} (HLT). Our main idea is to use the Hamiltonian learning
algorithm of \cRef{Bairey_2019} to efficiently identify a restricted
ansatz for the Gibbs Hamiltonian, described by a small number of
relevant parameters. Then, following the proposal of Entanglement
Hamiltonian Tomography (EHT) from \cRef{Kokail_2021}, and employing
improvements important in our context, the ansatz is optimized with
respect to the results of local measurements. Adapting elements from
these two methods allows us to significantly extend the scopes in
which they can be used and overcome drawbacks in each of them. All
of this is achieved while using moderate computational and quantum
resources.

As we will show, the use of the Hamiltonian learning algorithm to
generate an ansatz removes the need for an a-priori knowledge of the
Hamiltonian that parametrizes the state, as relied upon in EHT.
Moreover, the Hamiltonian learning ansatz turns out to require a
very low number of measurements for reaching a high tomography
precision, in comparison with traditional QST. At the same time,
using an optimization similar to that of EHT fixes the main
drawbacks of the Hamiltonian learning procedure of
\cRef{Bairey_2019}, as we in explain in \Sec{sec:ehlt}.

To demonstrate the potential of our approach as a practical tool for
characterizing the state of mid-sized quantum devices, we study it
in detail using both numerical simulations and experiments on actual
quantum hardware. We obtain a high fidelity of reconstruction of the
density matrix of 4 to 10 qubits on an IBM Quantum superconducting
device accessed via the cloud\cc{IBMQ}. Using a variational
algorithm to create the Gibbs state of a 1D transverse-field Ising
Hamiltonian, we observe fidelities of over $0.99$ in a system of 5
qubits using only fifty thousand measurements, while a traditional
QST reaches in our experiments fidelities of less than $0.6$ using
the same number of measurements (and about five million measurements
are required for a reliable result). For more than 5 qubits, the
density matrix cannot be obtained at all with brute-force methods.
Therefore, as an independent verification of the density matrix
reconstructed using our tomography approach for 6 to 10 qubits, we
use QST to directly measure subsystem reduced density matrices of 3
qubits, for all of which we observe a fidelity of over $0.99$ with
the corresponding subsystem states of the full HLT density matrix.
Finally, Greenberger-Horne-Zeilinger (GHZ)\cc{Greenberger1989}
states (also known as cat states) on quantum hardware are
presented as a test case where we can detect performance issues with
our method.

An open source Python code implementation of our method is freely available
for the community on Github\cc{Hltgithub}.

The structure of this paper is a follows; We start by giving a
general review of the previous methods in \Sec{sec:Background}. In
\Sec{sec:ehlt} we describe our method, and the states that fit into
it. Then, results from simulations and quantum hardware
demonstrating the method, are presented in \Sec{sec:results}. We
conclude with a discussion and outlook in
\Sec{sec:discussion-and-outlook}.

\section{Background}\label{sec:Background}

\subsection{Entanglement Hamiltonian Tomography}
\label{subsec:entanglement-hamiltonian-tomography}

Recently, in Refs.\cite{dalmonte2018quantum, Kokail_2021, kokail2021quantum}, a method called Entanglement
Hamiltonian Tomography was suggested for the tomography of states generated by
local Hamiltonians. Given a certain region in a many-body quantum system, its
reduced density matrix encodes the results of all possible measurements in
that region.  Instead of learning this density matrix directly, one
opts to learn the entanglement Hamiltonian, sometimes called modular
Hamiltonian, defined by:
\begin{align}
\label{eq:EH definition}
    H_E\EqDef-\ln\left( \rho \right),
\end{align}
where $\rho$ is the reduced density matrix, and hence \eq{eq:EH
definition} takes formally the same expression as \eq{Eq:Gibbs0}.

There are certain rigorous results, based on conformal field
theory{\cc{Calabrese_2009}} and the Bisognano Wichmann
theorem{\cc{Bisognano1975, Bisognano1976, Giudici2018}}, that guarantee the locality of the
entanglement Hamiltonian in various settings.
These apply for ground states and for
quantum quenches to a critical point of local Hamiltonians. The EHT
method heuristically builds on these results, which formally apply
for continuous, infinite and Lorentz invariant systems, and applies
them to discrete lattice systems and to time evolution. It suggests
an ansatz, based on these results, for the entanglement Hamiltonian,
which is a linear combination of the terms of the underlying
dynamical Hamiltonian. Given a Hamiltonian that is a sum of local
terms $H=\sum_i h_i$, the basic ansatz consists of the same terms,
but with arbitrary variational coefficients:
\begin{align*}
  H_E &=\sum_i \theta_i h_i,&  \theta_i &\in \mathbb{R}.
\end{align*}
The above ansatz can be further extended using physically motivated
terms (such as momentum terms). The vector of parameters
$\bm{\theta}$ is then optimized by sampling $\rho$ in several bases
of rotated product states, which are defined by random local
unitaries that form a $2$-design. The statistics of these
measurements is then used to find $\bm{\theta}$ using a cost
function and gradient based optimization methods (see
\Sec{subsec:optimization}).

\subsection{Hamiltonian learning using a constraint matrix}
\label{subsec:constraint-matrix}

Quantum Hamiltonian learning is the problem of
reconstructing a Hamiltonian by measuring its
dynamics\cc{Granade_2012,Wiebe_2014april,Wiebe_2014may,Wang_2017,
da2011practical,li2020hamiltonian,wang2015hamiltonian,zhang2014quantum,
samach2021lindblad} or a steady state of the dynamics
\cc{haah2021optimal,anshu2021sample,Rudinger_2015,
kieferova2017tomography,kappen2020learning,evans2019scalable}.
A recent method for learning local Hamiltonians from steady states
using local measurements\cc{Bairey_2019} can be applied for the
learning of Gibbs \emph{states} of local Hamiltonians. Given many
copies of a Gibbs state $\rho = \frac{1}{Z}e^{-H}$, we aim at
learning the Gibbs Hamiltonian $H$, instead of learning $\rho$
directly. The method begins by noting that $[H,\rho]=0$ and
therefore for \emph{every} operator $A$,
\begin{align}
\label{eq:Ehrenfest}
  \av{i[A,H]}_\rho \EqDef i\Tr([A,H]\rho) =
  i\Tr(A[H, \rho]) = 0,
\end{align}
where we used the cyclicity of the trace. Assuming $H$ is a
\emph{$k$-local} Hamiltonian, we first expand it in terms of a basis
of $k$-local operators $\{S_m\}_{m=1}^M$, which we conveniently
choose to be $k$-local Pauli operators (since the normalization
factor $Z$ of \eq{Eq:Gibbs0} can absorb the identity component of
$H$, we can assume without loss of generality that $H$ is
traceless). Then
\begin{align}
\label{eq:H_def}
  H = \sum_{m=1}^M v_m S_m,
\end{align}
and our task is to estimate the coefficients $\{v_m\}$. Choosing a
set of local ``constraint operators'' $\{A_q\}_{q=1}^Q$, where $Q\ge
M$, we use \Eq{eq:Ehrenfest} to obtain a linear constraint on
$\{v_m\}$ for each $A_q$;
\begin{align}
\label{eq:commutator}
    \av{i[A_q,H]}_\rho = \sum_m \av{i[A_q,S_m]}_\rho v_m = 0.
\end{align}
This set of equations can be
compactly written as a matrix equation for the vector $\bm{v} \EqDef
\{v_m\}$,
\begin{align}
    \label{eq:K_nm}
    K\bm{v}=0,\qquad K_{q,m}\equiv \left< i\left[ A_q,S_m \right]
    \right>_\rho,
\end{align}
where $K$ is a $Q\times M$ real matrix, which we call the
\emph{constraint matrix} (CM).  As it is a homogenous equation,
solving it gives $\bm{v}$ \emph{up to an overall factor}.  In order to have
more equations than unknown (i.e., $Q>M$), we take the set of
constraint operators to be all $(k+1)$-local Paulis.

The CM $K$ can be measured directly, as each of its entries is
an expectation value of at most a $2k$-local Pauli. However, any
finite number of measurement will necessarily produce statistical
errors in the measured CM:
\begin{align}
\label{eq:noisy-CM}
  \tilde{K} &= K+ \epsilon E,  &
  \epsilon &= \orderof{1/\sqrt{m}}.
\end{align}
Above, $\epsilon E$ is the statistical noise from the
measurements, and we have distinguished for clarity a formal
parameter $\epsilon$ that models the standard deviation of
statistical noise, which depends on the number of measurements $m$.
Consequently, the equation $\tilde{K}\bm{v}=0$ will generally not
have a solution, and instead we will look for a $\bm{v}$ that
minimizes $\|\tilde{K}\bm{v}\|$. This is precisely the smallest
right singular vector (SV) of $\tilde{K}$, which will deviate from
being proportional to the GH because of the noise. As shown in
\cRef{Bairey_2019} using a perturbative analysis, the error in
$\bm{v}$ due to the statistical noise depends on the singular values
of $K$:
\begin{align}
    \label{eq:reconstruction-error}
    \mathbb{E}(\left\| \bm{v}-\bm{v}_{true} \right\|)
      \approx \epsilon \sqrt{\sum_{i>0} \frac{1}{\lambda_{i}}}
      + \orderof{\epsilon^2},
\end{align}
where $\Bv_{true}$ is the exact coefficients vector of the
Hamiltonian, and $\lambda_i$ are the eigenvalues of
$K^T K$. Therefore, a sufficiently large gap
between $\lambda_1$, the second smallest singular value of $K$, and
$\lambda_0=0$ is an necessary condition for the CM method to work
with a reasonable number of measurements.

\section{Hamiltonian Learning Tomography}
\label{sec:ehlt}

\subsection{Description of the algorithm}\label{subsec:description-of-the-algorithm}

In this section, we show how we can combine the Hamiltonian learning
algorithm together with EHT, and introduce the Hamiltonian Learning
Tomography (HLT) method.

To understand the general idea behind it, consider the expansion in
\Eq{eq:H_def}, which defines a vector $\bm{v}$ for any Hamiltonian
$H$ via the $k$-local Pauli basis. This expansion also
defines a Hamiltonian $H_{\bm{v}}$ for every vector $\bm{v}$. Then
by the definition of the $K$ matrix in \Eqs{eq:commutator}{eq:K_nm},
it follows that for every vector $\bm{v}$, the norm
$\norm{K\bm{v}}$ measures the commutativity of $H_{\bm{v}}$ with
$\rho$. When $K$ is slightly perturbed, the Hamiltonians that
correspond to its lowest singular values will be the ones that
nearly commute with $\rho$, and so it is plausible that the actual
Gibbs Hamiltonian is well approximated by a superposition of these
Hamiltonians.

This intuition can be justified using perturbation theory, as shown
in \App{sec:appendix-error-estimation}. Indeed, considering
\Eq{eq:noisy-CM} perturbatively in $\epsilon$, the lowest singular
vector of $K$ will be a equal to the lowest singular vector of
$\tilde{K}$ \emph{plus} a first order correction, which is a
superposition of the other singular vectors of $\tilde{K}$.  The
higher the singular value is, the smaller is the weight of the
corresponding vector in the super position. Specifically, we show in
\App{sec:appendix-error-estimation} that if $\bm{v}_0$ is the zero
singular vector of $K$ that corresponds to the exact Hamiltonian,
and $\tilde{\bm{c}}_l$ is the projection of this vector on the
subspace of the first $l$ singular vectors of the noisy $\tilde{K}$,
then
\begin{equation}
  \label{eq:perturbation-bound}
    \mathbb{E}\big(\norm{\tilde{\bm{c}}_l-\bm{v}_0}^2\big)
    = \epsilon^2\sum_{i\ge l} \frac{1}{\lambda^2_i}
      +  \orderof{\epsilon^3},
\end{equation}
where $\epsilon$ is the statistical noise parameter from
\Eq{eq:noisy-CM} and the averaging $\mathbb{E}(\cdot)$ is defined with
respect to the underlying probability space in which $E$ is defined.

Taking the lowest $l$ right singular vectors of $\tilde{K}$, we can
now follow the EHT method to find the actual coefficients in the
expansion of the actual GH. Formally, we define a $k$-local HLT algorithm with
cutoff $l$ to consist of the following steps:
\begin{enumerate}
    \item For a given (unknown) state $\rho$ prepared on a quantum device,
    measure all $2k$-local Paulis expectation values.

    \item Construct a constraint matrix $\tilde{K}$ as described in
      \Sec{subsec:constraint-matrix} with $k$-local Paulis as
      $\left\{ S_m \right\}$ and $\left( k+1 \right)$-local Paulis
      as $\left\{ A_q \right\}$, and find its $l$ lowest singular
      vectors $\{\tilde{\bm{v}}^{(i)}\}_{i=1}^l$.

    \item Define an ansatz for the GH using the above $l$ vectors
      and use them to introduce the variational parameters
      $\bm{\theta}$,
      \begin{align}
          \label{eq:kl-EHLT}
            H(\bm{\theta}) = \sum_{i=1}^l \theta_i
              \Big(\sum_m \tilde{v}^{(i)}_m S_m \Big),
      \end{align}
     and optimize $\bm{\theta}$ using a measurement-based loss
     function (see \Sec{subsec:optimization}).
\end{enumerate}

For a fixed $k$, the number of $k$-local Paulis scales linearly
in the size of the system. Therefore, the size of the CM and the
number of its singular values are also linear in the number of
qubits, which gives a polynomial description of the underlying Gibbs
state.

Using the fact that the $k$-local Paulis from step (i) sit on
contiguous sites, we can measure many of these observables
simultaneously using the so-called \emph{overlapping local
tomography} technique introduced in \cRef{zubida2021optimized}.
This allows us to use $m$ global measurements to estimate
every $t$-local observable using $m/3^t$ measurement results for
$t=1,\ldots,2k$ --- see \App{sec:overlapping-local-tomography} for
more details. Finally, the same measurement results that were used
in step (i) can be conveniently reused in step (iii) at the
optimization step.

At this point, the reader may wonder why use only $l$ SVs out of all
the possible SVs (e.g.\ $12N-9$ in case of $k=2$ and $1D$ chain).
The reason is that limiting $l$ leads to a much more efficient
optimization and might ease the convergence of gradient-based
methods as a result of lower dimensionality\cc{bubeck2014convex,
bellman1957dynamic,van2009dimensionality}.  Additionally, the
complexity of each gradient computation, e.g., for the optimization
algorithm from \cRef{Branch1999}, scales linearly with the number of
parameters. Moving to larger systems, this is even more pronounced.
For example, an optimization with $10$ qubits using $30$ SVs required a few
hours on a personal computer in our case, while using all the SVs may take up
to several days. This saving in optimization time might become
crucial when one wants to perform tomography many times.

\subsection{Ansatz Optimization}
\label{subsec:optimization}

Once the $l$ singular vectors $\tilde{\bm{v}}^{(i)}$ of $\tilde{K}$
with the smallest singular values are found (see \Sec{sec:ehlt}), we
need to optimize their corresponding weights. Specifically, we want
to find the weights $\bm{\theta}$ such that the state $\rho(\bm{\theta})
= \frac{1}{Z}e^{-H(\bm{\theta})}$, with $H(\bm{\theta})$ defined
in~\eqref{eq:kl-EHLT}, will match the measurements results. To that end,
we use the loss function defined by
\begin{align}
\label{eq:loss-function}
  \chi^2(\bm{\theta})=\sum_B \sum_{\bm{s}}
    \left[ P_B\left( \bm{s} \right)
    - \bra{\bm{s}}U_B^\dagger\rho_A(\bm{\theta})
    U_B \ket{\bm{s}} \right]^2.
\end{align}
The first summation is over the $2k$-local Pauli bases needed for
the construction of the constraint matrix (see
\App{sec:overlapping-local-tomography}). $P_B(\bm{s})$ is the
\emph{empirical} probability to measure the bitstring $\bm{s}$ in
the Pauli basis $B$, which was obtained from the experiment, and
$\bra{\bm{s}}U_B^\dagger\rho_A(\bm{\theta}) U_B \ket{\bm{s}}$ is the
corresponding probability from the ansatz. $U_B$ is a unitary that
rotates the computational basis to the $B$ basis. The second summation is
over all the possible string outcomes of $N$ qubits measurements.

Our loss function is different from the one used in
\cRef{Kokail_2021} in that we do not use random unitaries for $U_B$.
This is advantageous for HLT since it allows one to use the same
measurements for both the calculation of the CM $\tilde{K}$ and the
optimization.  This choice is justified by the fact that $2k$-local
Paulis are tomographically complete, meaning they form a basis of
$2k$-qubits density matrices. Assuming that the underlying state we
learn is described by a $k$-local GH, this means that when the
loss function vanishes, the underlying state has the same $2k$-local
reduced density matrices as the ansatz state $\rho(\bm{\theta})$. By
\cRef{anshu2020sample}, this implies that the two states are
\emph{globally} equal.  Therefore, minimizing the loss function,
using only Paulis basis measurements, is enough.

\subsection{Assessing the HLT performance}
\label{subsec:QA}

A non-trivial question is how to assess the performance of the HLT
method in an actual experiment; how can we tell how close is the HLT-recovered
state to the actual state in the hardware? We suggest two practical
checks that can be performed. First, by doing local
tomography of subsystems with a number of qubits that is larger than
$k$ but feasible for QST, and checking that the resultant reduced
density matrices are close to the corresponding reduced density
matrices of the HLT state. Secondly, one can check
how well the HLT state converges as a function of $l,m$, by
calculating the fidelity of the state that was obtained with the
highest $l,m$ with states of lower $l,m$.  As we demonstrate in
\Sec{subsec:ghz-state-ehlt} (\Fig{fig:ghz-results} bottom), fast
convergence indicates a better performance of HLT.

\section{Results}\label{sec:results}

In this section we present our results, which demonstrate the usage
and performance of HLT on two cases: a chain of $N$ qubits with
either a 1D transverse Ising model Gibbs state, or the reduced
density matrix of a GHZ state. As we show in
\Sec{subsec:ghz-state-ehlt}, the latter is a Gibbs state of the
classical Ising model at $T=0$. In both cases, $k=2$. We present
both numerical simulations results and results of IBM Quantum
devices accessed via the cloud\cc{IBMQ}, implemented using the
Qiskit framework\cc{qiskit}.
Simulations and data processing were done on a personal computer
with $32$Gb RAM memory.
To assess the quality of our method, we
use the quantum fidelity, defined by $F(\rho,\sigma) \EqDef
(\Tr\sqrt{\sqrt{\rho} \sigma \sqrt{\rho}})^2$\cc{Uhlmann1976,
Jozsa1994}.

When running HLT with $m$ measurements, those were distributed equally
between the $3^{2k}$ basis-measurements that were needed to measure
the expectation values of $2k$-local Paulis (see \Sec{sec:ehlt} and
\App{sec:overlapping-local-tomography}). On the other hand, when $m$
measurements were used for ordinary QST on $N$ qubits, the measurements were
distributed equally between the $3^N$ measurement bases. More
details on the QST procedure can be found in \App{sec:appendix-QST}.
When all SVs were used for HLT, $l$ is denoted by $l_{max}$.

In our analysis, the optimization was done using
SciPy's\cc{2020SciPy-NMeth} nonlinear least-squares solver (with
2-point Jacobian computation method) which uses the algorithm from
\cRef{Branch1999}. We also implemented gradient decent optimization
method (using the same loss function) with
PyTorch\cc{NEURIPS2019_9015} and were able to reproduce the results
from the first optimization method, validating the optimization
results.

\subsection{Transverse Ising Gibbs States}
\label{subsec:gibbs-state-ehlt}

\begin{figure}
\label{fig:theory-results-1} \centering
  \includegraphics[width=0.48\textwidth]
    {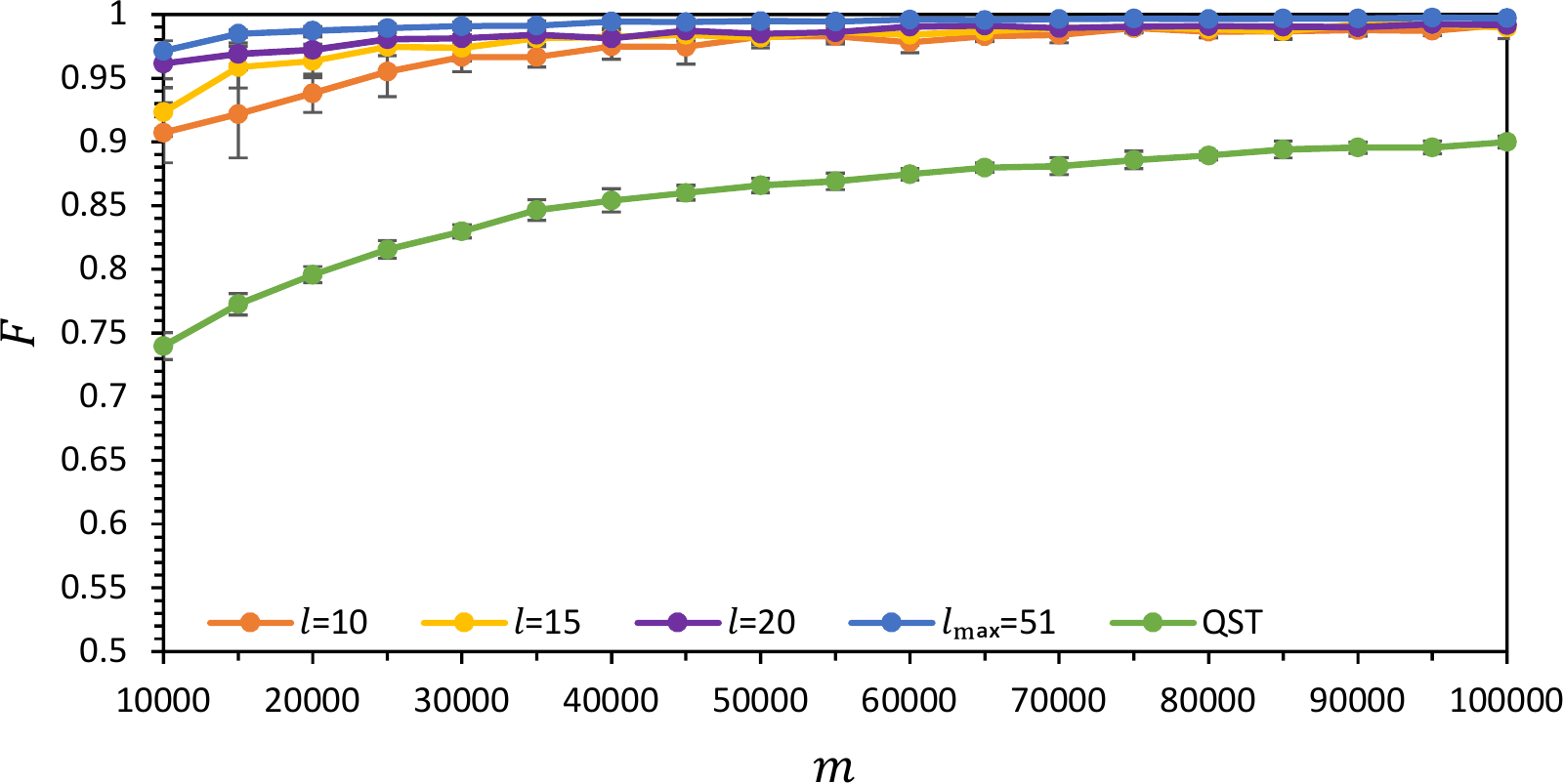}
  \caption{Results of a numerical simulation checking the HLT
    performance on an \emph{exact} Gibbs state of the Transverse
    Ising model on 5 qubits with $k_B T=1$. The plot shows the
    fidelity between the exact Gibbs state  and the HLT-reconstructed
    state as a function of the total number of measurements ($m$)
    for different numbers of singular values ($l$). The green line
    shows the fidelity of the exact state with a QST-reconstructed
    state using $m$ measurements. The plot shows the clear
    advantage of the HLT approach in this case. Results are averaged
    over $10$ simulation runs, error bars calculated using standard deviation.}
\end{figure}

\begin{figure*}
    \label{fig:theory-results-2}
    \centering
    \hfill
    \begin{minipage}{0.015\textwidth}
        \vspace{-110pt}
        \textbf{a)}
    \end{minipage}
    \begin{minipage}{0.47\textwidth}
        \includegraphics[width=1\textwidth]{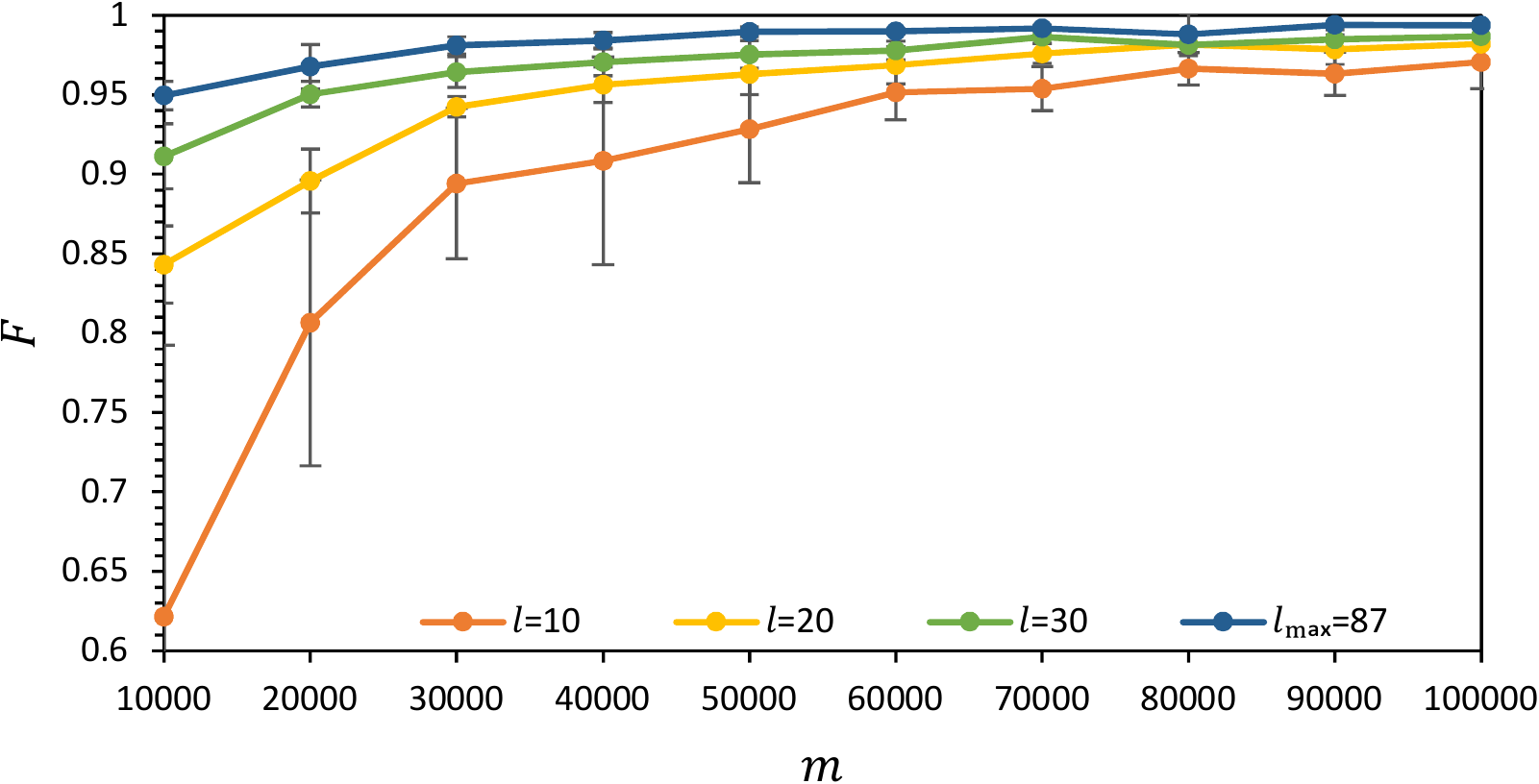}
    \end{minipage}
    \hfill
    \begin{minipage}{0.015\textwidth}
        \vspace{-110pt}
        \textbf{b)}
    \end{minipage}
    \begin{minipage}{0.47\textwidth}
        \includegraphics[width=1\textwidth]{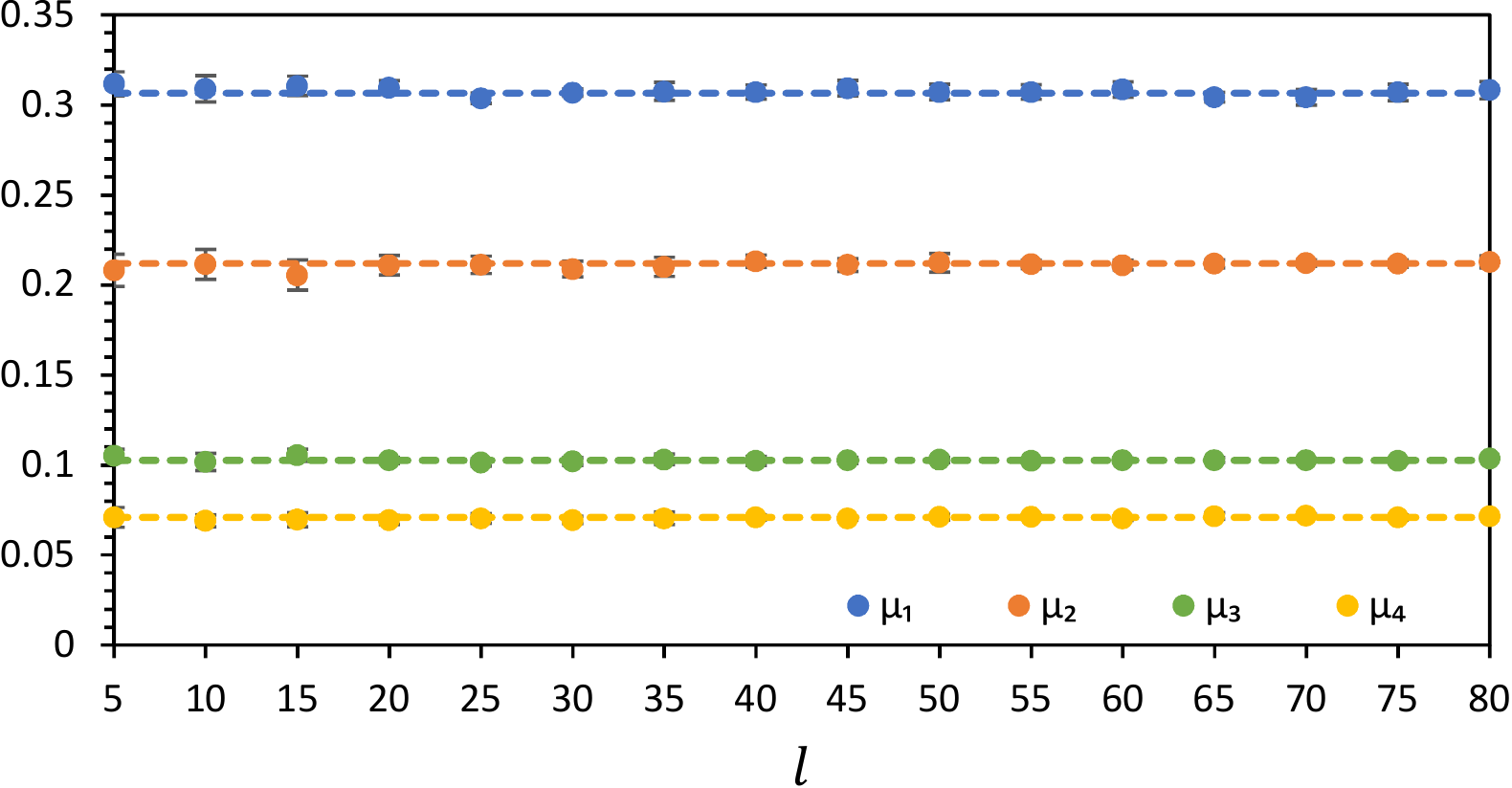}
    \end{minipage}
    \caption{Numerical simulation results of HLT of an exact Gibbs state of $8$
      qubits for the transverse-field Ising model.
      Results are averaged over $10$ simulation runs, error bars calculated
      using standard deviation. \textbf{a)} Fidelity between
      the exact Gibbs state and HLT-reconstructed state as a function of the
      number of measurements ($m$), for different numbers of singular values
      ($l$). \textbf{b)} The performance of HLT on
      non-local, non-linear quantities. The plot shows the
      four highest eigenvalues $\mu_i$ of the HLT-reconstructed
      state with $m=10^5$ as a function of the number of singular values ($l$).
      Dashed lines denote the exact eigenvalues of the Gibbs state.
      The plot shows excellent reconstruction around $l \gtrsim 20$.}
\end{figure*}

The 1D Transverse Ising Hamiltonian is given by:
\begin{align}
\label{eq: transverse ising}
  H_{TI}=\sum_{\left< i,j \right>}X_i X_j + \sum_i Z_i.
\end{align}
In this section, we study the case $k_B T=1$, which leads to the
Gibbs state:
\begin{align}
\label{eq:ising-dm}
  \rho_{TI} &= \frac{1}{Z}e^{-H_{TI}}, &
  Z &= \Tr\left( e^{-H_{TI}} \right).
\end{align}

\subsubsection{Exact Gibbs state simulations}

We first studied the performance of HLT in numerical simulations of
an exact Gibbs state of the transverse-field Ising model. We started with
a $5$ qubits state, and simulated the HLT protocol on it using
$l=10,15,20$ and $l_{max}=51$ SVs and a total number of measurements
that was varied from $m=5,000$ to $m=10^5$. For comparison, we also
performed a traditional QST using the same number of measurements
(but with different measurement bases --- see \App{sec:appendix-QST}
). Our results are shown in
\Fig{fig:theory-results-1}, where the fidelities of the different
protocols with the exact Gibbs state are plotted.
The plot clearly demonstrates the advantage of HLT
over traditional QST for the case of an ideal local Gibbs state. For
HLT, fidelities of over $0.9$ and $0.97$ were obtained using only
$10^4$ and $5\times 10^4$ measurements, respectively, for $l\ge 15$.
On the other hand, QST fidelities were lower than $0.8$ for $10^4$
measurements and reached maximal fidelity of about $0.91$ with
$10^5$ measurements.
Figure~\ref{fig:theory-results-1} also clearly demonstrate the
improvement of the fit as we add more SVs; it shows that in this
case, using $l=20$ SVs gives comparable results to $l_{max}=51$.

We have also tested the performance of HLT, by repeating the
numerical experiment on a larger number of $8$ qubits. Here, we did
not perform a QST, as it is not realistic on present day quantum
hardware. Figure~\ref{fig:theory-results-2}a shows the fidelity of
the HLT with the true Gibbs state as a function of number of
measurements $m$ for $l=10,20,30,40$ and $l_{max}=87$. Fidelity of
over $0.9$ was achieved with merely $l = 30$ and $m=
2\times 10^4$. The $l=10,20$ results showed an instability that can be
attributed to the instability of the subspace spanned by the lowest
SVs of $\tilde{K}$ for the rather low number of measurements that we
used. This instability is reduced by either by increasing $m$, or by
increasing $l$ and thereby increasing the probability of a large
overlap between the subspace of the lowest SVs and the true Gibbs
Hamiltonian.

In \Fig{fig:theory-results-2}b we have tested the ability of HLT to
infer \emph{global}, non-linear properties of the state. The plot
shows the highest eigenvalues the HLT state, which was reconstructed
using $10^5$ measurements using different numbers of SVs $l$. The
straight dashed lines show the exact values of these eigenvalues.
 HLT manages to reconstruct these non-local and
highly non-linear quantities with an accuracy of $0.01$ for
$l\geq 20$.

\subsubsection{Noisy Gibbs state simulations}
\label{subsec:Variational Gibbs State}

To test the performance of HLT on more realistic scenarios, we
numerically simulated the variational quantum algorithm VarQITE of
\cRef{Yuan2019} for creating a Gibbs state.  Starting with a
maximally entangled state between two sets $A$ and
$B$ of qubits, the algorithm approximates an imaginary time evolution of
$e^{-H/2}$ on one set, with $H=H_A\otimes\Id_B$, so that the reduced
density matrix on $A$ becomes the Gibbs state $\frac{1}{Z}e^{-H_A}$.
The circuit that approximates $e^{-H/2}$ consists of C-NOT gates and
parametrized single qubit rotations about the $Y$ axis that approximate
the Torreterized imaginary time evolution. The weights in this
circuit are determined variationally using a gradient-based method
and discrete steps. Technical details of the algorithm can be found
in \App{sec:appendix-VarQITE}.

To benchmark HLT, we simulated a noisy implementation of the VarQITE
algorithm on 5 qubits using the Qiskit simulator with a noise model, where the
noise parameters (such as $T_1,T_2$ times and 1,2-qubit gate errors)
were determined from actual quantum hardware.
Comparing to the ideal, noiseless simulations of the previous
subsection, the above algorithm introduces several sources of errors
that can challenge the HLT. First, it is easy to see that even an
ideal, noiseless implementation of the algorithm will still not
create an exact Gibbs state due to the unavoidable Trotter
errors of a low-depth circuit. In addition, the circuit itself is
found using a variational method, which might introduce optimization
errors. Finally, on top of that, there are the errors caused by the
noisy simulation. All of these effects will cause the Gibbs
Hamiltonian to deviate from the ideal transverse Ising model
Hamiltonian, and will introduce some less local terms to it (such as
3-local or 4-local terms). In our case, we wanted to check how
HLT with $k=2$ can cope with this type of state.

\begin{figure}
\label{fig:noisy-simulation-results} \centering
  \includegraphics[width=0.48\textwidth]
      {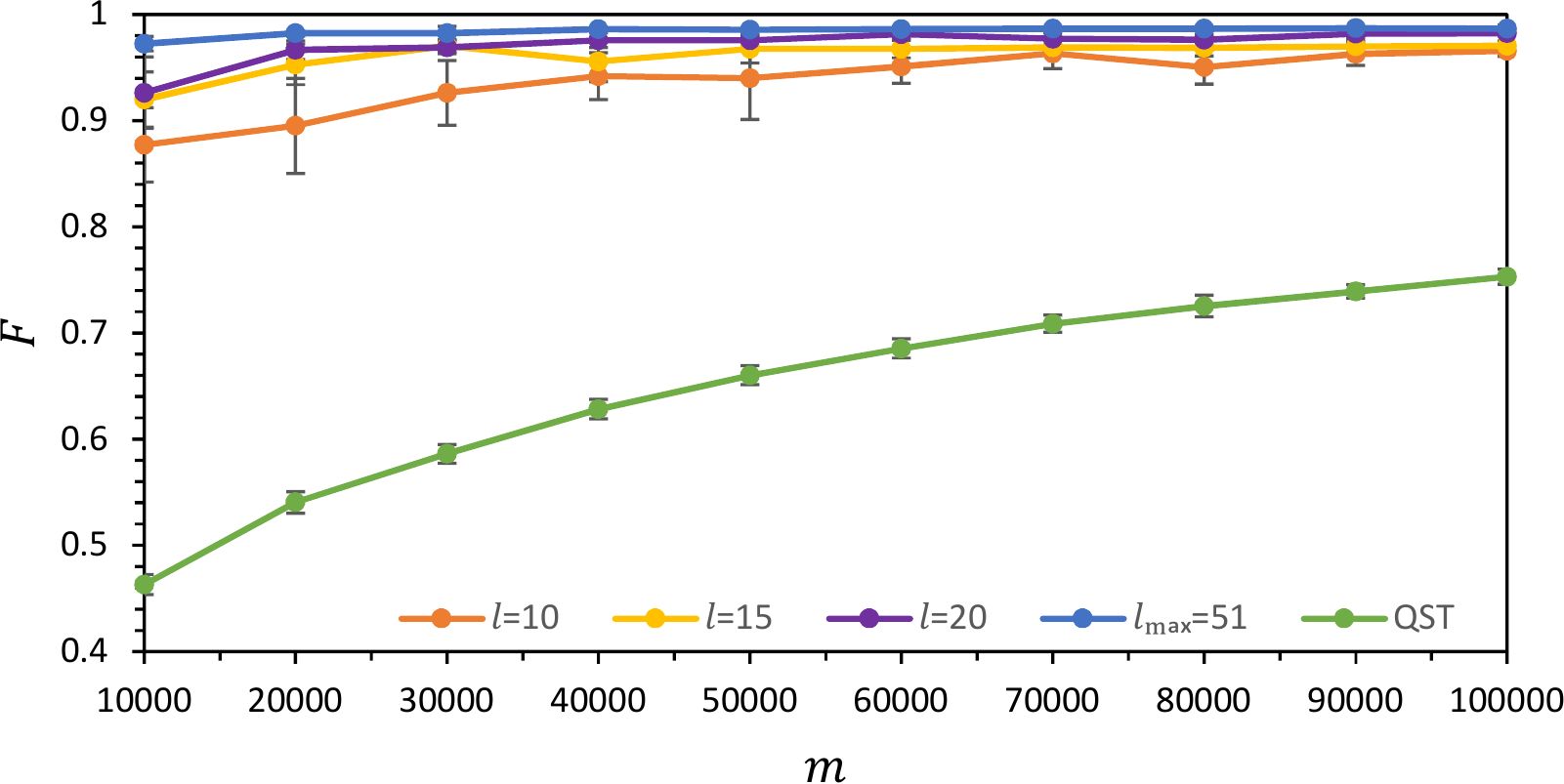}
  \caption{Performance of HLT on a 5-qubits Gibbs state created by a
  simulation of the VarQITE algorithm using Qiskit
  backend with a noise model. The plot shows the Fidelity between the exact
  simulated state and the HLT/QST-reconstructed state as a function of the total
  number of measurements ($m$). The number of SVs ($l$) used by HLT is
  shown in the legend. Measurements for QST were distributed equally over
  all $3^5$ different Pauli bases. Results are averaged over $10$ simulation
  runs, with error bars indicating one standard deviation.}
\end{figure}

In \Fig{fig:noisy-simulation-results} we present the
performance of HLT 5-qubits noisy numerical simulations. As in
\Fig{fig:theory-results-1}, we plot the fidelity of the HLT states
with the actual state in the simulation. The plot shows HLT
performance for different numbers of SVs and measurements $m$, and
compares it with a traditional QST with the same number of measurements.
HLT performed well even though the GH of the state was not
entirely 2-local, because of the noise in the simulation.
To be precise, the 3-local or more Pauli coefficients were $4.68\%$
of the GH norm (see \App{sec:appendix-hardware-state-preparation}).

\subsubsection{Quantum hardware results}

In this subsection we present benchmarks of the HLT on actual
quantum hardware. As in the previous section, we used the VarQITE
algorithm to generate an approximate transverse Ising model Gibbs
state on $N$ qubits by approximate imaginary time evolution to a
maximally entangled state of $2N$ qubits.

Figure~\ref{fig:hw-results-gibbs-cutoff} presents the results of a
$5$ qubits HLT on the \emph{ibmq\_mumbai} $27$-qubits backend~\footnote{The
experiment was conducted on the device \emph{ibmq\_mumbai} on 23/11/2021.
Backends are listed in \url{https://quantum-computing.ibm.com/}}.
Unlike the numerical simulation case, here we did not know the exact
quantum state. Therefore, in order to assess the quality of HLT
state, we compared it to a high-precision QST state that was
obtained using a much larger set of measurements: we used $2\times
10^4$ measurements on each of the $3^5$ QST bases, which amounts to
almost $5$ millions measurements. In \App{sec:appendix-QST} and
Table~\ref{tab:qst-fidelity} we use numerical noisy simulations to
estimate that for this number of qubits and measurements, the QST
fidelity from the real quantum state is about $0.985$.

Figure~\hyperref[fig:hw-results-gibbs-cutoff]{4a} shows the fidelity of the
QST state with the state reconstructed by HLT with merely
$10^4$---$10^5$ measurements and $l=10,30$. Fidelity exceeded $0.98$
with $2\times 10^4$ measurements or more. We also used QST with
diluted number of measurements, which achieved a fidelity of less
than $0.7$ with the high-precision QST, using a total of $10^5$
measurements (Fig.~\hyperref[fig:hw-results-gibbs-cutoff]{4a}, yellow points). In
order to further verify the above results, we extracted the four
highest eigenvalues of HLT and QST reconstructed density matrices
(Fig.~\hyperref[fig:hw-results-gibbs-cutoff]{4b}). Agreement of $0.01$ with the
high-precision QST result was obtained with merely $25$ SVs,
demonstrating the ability of HLT to predict global, non-linear
properties of the state on quantum hardware.

\begin{figure*}
    \label{fig:hw-results-gibbs-cutoff}
    \centering
    \hfill
    \begin{minipage}{0.015\textwidth}
      \vspace{-100pt}
        \textbf{a)}
    \end{minipage}
    \begin{minipage}{0.47\textwidth}
        \includegraphics[width=1\textwidth]{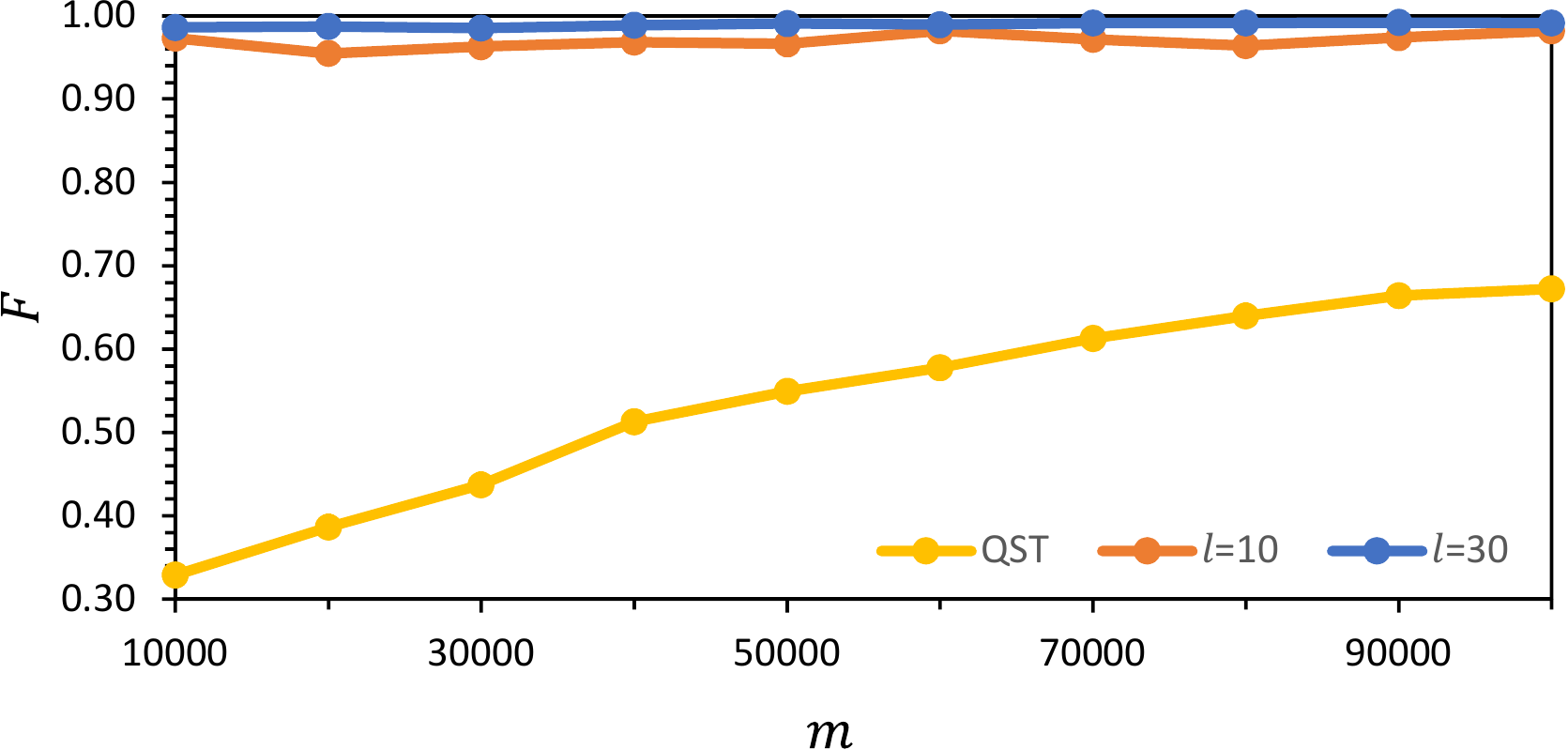}
    \end{minipage}
    \hfill
    \begin{minipage}{0.015\textwidth}
        \vspace{-100pt}
        \textbf{b)}
    \end{minipage}
    \begin{minipage}{0.47\textwidth}
        \includegraphics[width=1\textwidth]{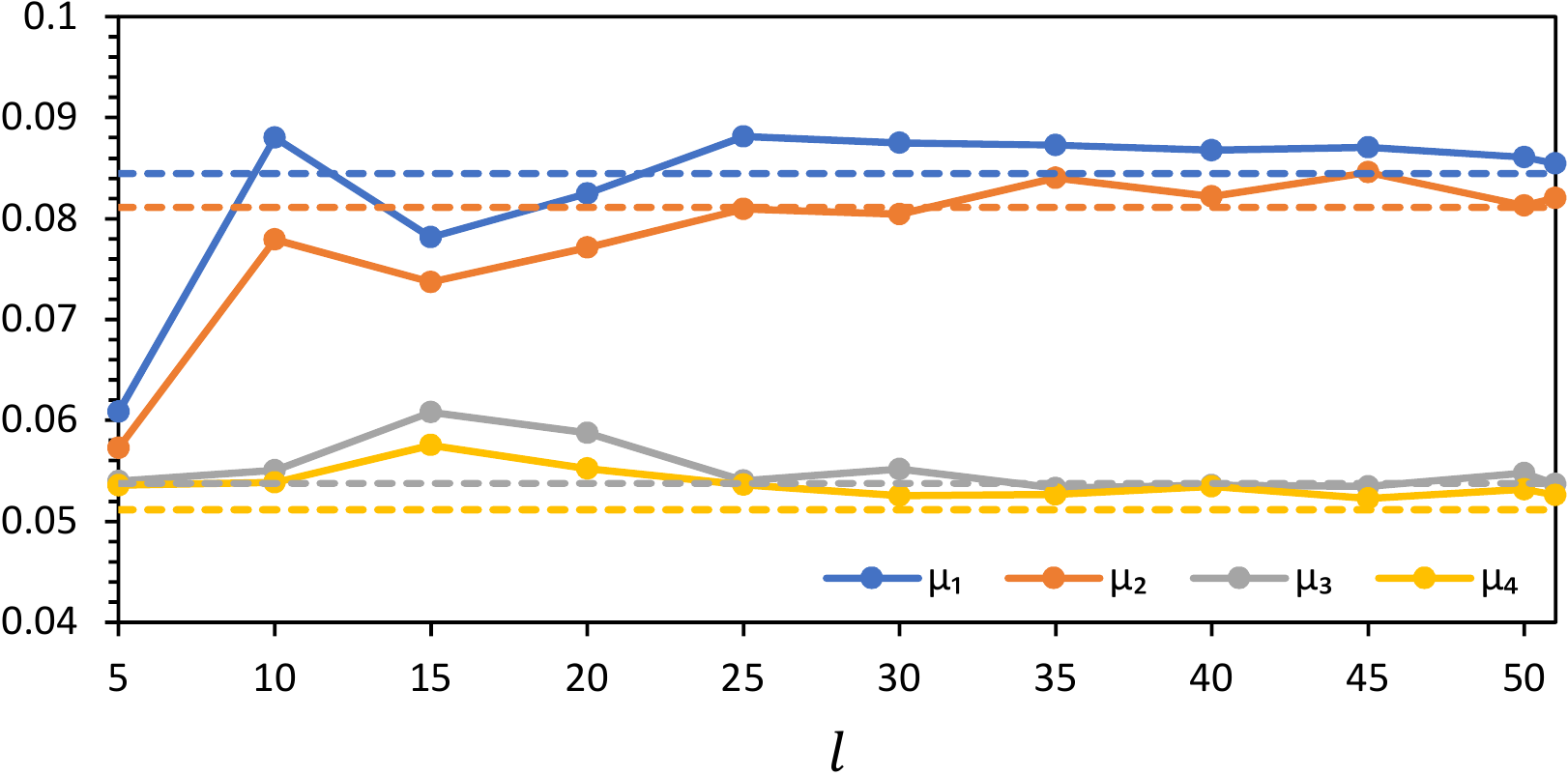}
    \end{minipage}
    \caption{Quantum hardware results of HLT and QST using the
    \emph{ibmq\_mumbai} 27-qubits backend over $5$ qubits in a Gibbs state of
    the transverse-field Ising model.
    \textbf{(a)} The fidelity of HLT and QST as a function of
    the number of measurements $m$. The fidelity was calculated with
    respect to a high-precision QST, which used $2\times 10^4$ measurements
    in each of the $3^5$ QST bases (with estimated fidelity of above
    $0.985$). Subsets of the same measurements were used for QST and HLT in
    order to eliminate bias due to temporal fluctuations of device
    parameters. The plot shows a clear advantage of the HLT method,
    already with $l=10$ SVs. \textbf{(b)} The four
    highest eigenvalues, $\mu_i$, of HLT-reconstructed state with
    $10^5$ measurements as a function of $l$. Dashed
    lines show the four highest eigenvalues of the high-precision QST result.}
\end{figure*}

Figure~\ref{fig:partial-verification-results} shows the results of
HLT on larger systems of $4-10$ qubits~\footnote{The
experiment was conducted on the device \emph{ibmq\_toronto} on 06/11/2021.}.
Just as in the $5$ qubits case, we used the VarQITE algorithm to approximate
the transverse model Ising Gibbs state on these systems. For $N>5$, QST of the
entire system is not feasible because of the large number of
measurements needed. Instead, to evaluate the HLT performance we
used QST to reconstruct the states of all $3$-qubit contiguous
subsystems (i.e $\left\{ 1,2,3 \right\},\ldots,\left\{ N-2,N-1,N
\right\}$) and calculated the average fidelity between them and
HLT-reconstructed states. The local QST was done using $8,192$
measurements in each of the QST $3^3$ bases, and as we argue in
\App{sec:appendix-QST}, the expected QST fidelity in this case is
above $0.995$. As shown in \Fig{fig:partial-verification-results},
average local fidelity results between the HLT and the local QST was
above $0.99$ with $l=30$ and a modest number of $m=10^5$
measurements for any system size. For $N=10$ qubits, the low number
of SVs used ($30$ out of $l_{max}=111$) was especially important for
the optimization step, which would have taken days had we used the
full $l_{max}$ SVs.

We end this subsection by noting that the actual Gibbs states
prepared on the quantum hardware differed substantially from the
ideal Transverse Ising model Gibbs state. This can be attributed to
imperfections in the quantum hardware and the fact that the
optimization of the variational algorithm was done on a classical
simulation and not on the quantum hardware itself. As described in
detail in \App{sec:appendix-hardware-state-preparation}, the QST and
HLT results show that the underlying Gibbs Hamiltonians were
classical to some extent: their most dominant terms in the Pauli
expansion were in the $Z$ basis. The HLT method is, of course, blind
to this bias. The fact that it performed well on local Gibbs states
that were different from the intended ones, demonstrates its
capabilities to reconstruct unknown states with local GHs.

\begin{figure}
    \centering
    \includegraphics[width=0.48\textwidth]{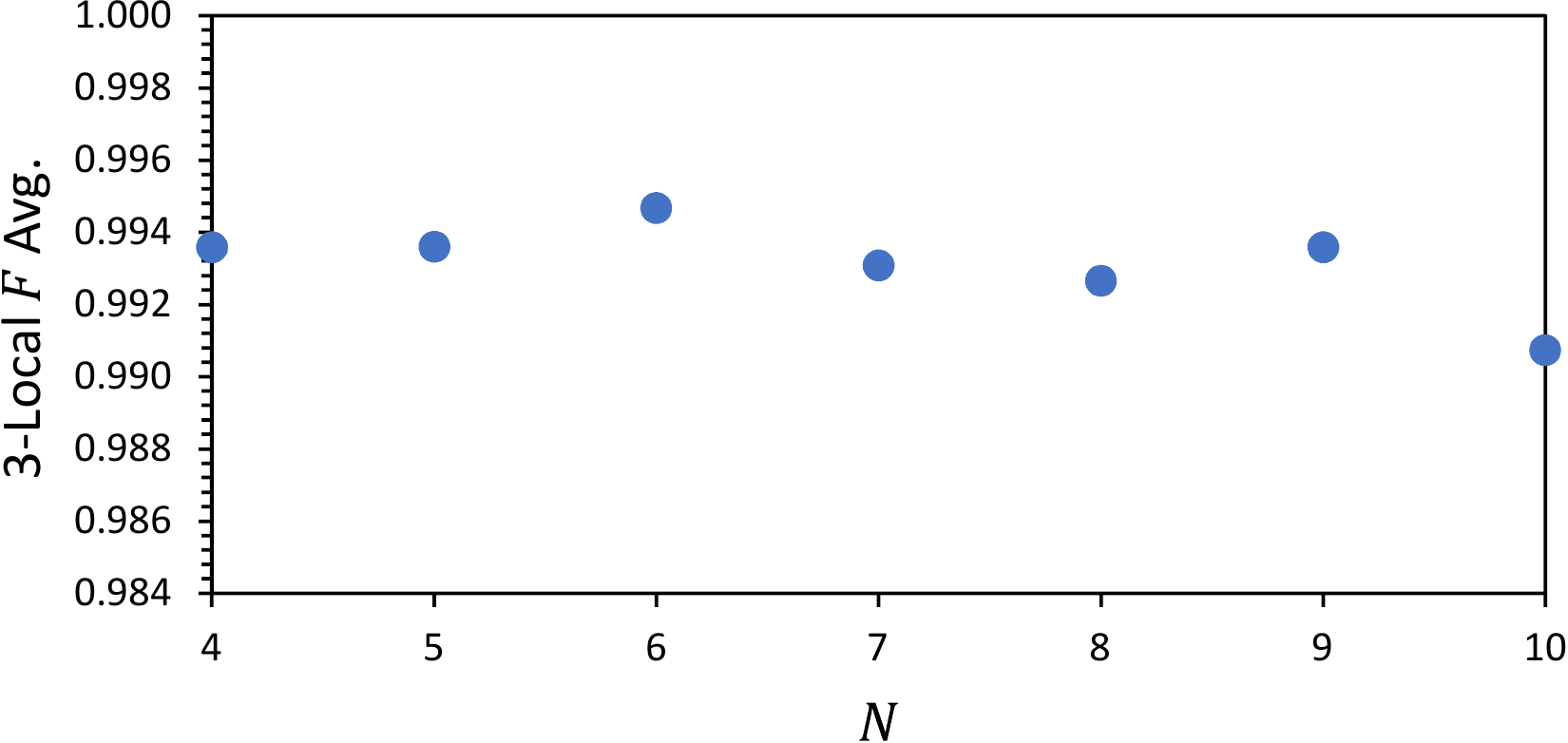}
    \caption{Quantum
    hardware results of HLT using the \emph{ibmq\_toronto} 27-qubits
    backend, for transverse-field Ising
    Gibbs states with $4$---$10$ qubits.
    The plot shows the average fidelity, over all
    $3$-qubits contiguous subsystems, between QST and HLT-reconstructed
      states as a function of the number of qubits $N$.
    QST was run separately on each subsystem, whereas using HLT with
    $10^5$ measurements and $30$ SVs the full state was reconstructed,
    from which the reduced density matrices of the subsystems were extracted.
    QST for each subsystem used $8,192$ measurements in each
    of the $3^3$ local bases (with an estimated fidelity above $0.995$).}
    \label{fig:partial-verification-results}
\end{figure}

\subsection{GHZ state}
\label{subsec:ghz-state-ehlt}

A \emph{GHZ} state\cc{Greenberger1989}, also called \emph{cat
state}, on $N$ qubits is defined as:
\begin{align*}
  \frac{\left| 0\right>^{\otimes N}
    + \left|1\right>^{\otimes N}}{\sqrt{2}}.
\end{align*}
While the GHZ is not by itself a local Gibbs state, tracing out one
or more qubits gives the reduced density matrix
\begin{align}
\label{eq:ghz-dm} \rho_{GHZ} = \frac{1}{2}\left(
    \left|0\ldots0\rangle \langle 0\ldots0\right| +
    \left|1\ldots1\rangle \langle 1\ldots1\right| \right) .
\end{align}
This is the $T=0$ Gibbs state of the classical Ising Hamiltonian,
and so by using
\begin{align}
    \label{eq:h-ghz}
    H_{GHZ} = - a\sum_{i=1}^{M-1}Z_i Z_{i+1}\quad,\quad a\gg 1.
\end{align}
we get
\begin{align*}
    e^{-H_{GHZ}}=e^{aM}\bigg[& \left|0\ldots0\rangle \langle 0\ldots0\right| +
    \left|1\ldots1\rangle \langle 1\ldots1\right| +\\
    & +\sum_{j=2}^{2^M-1} e^{-r_j a}  \left|j\rangle\langle j\right| \bigg]
    \quad,\quad \forall j:\,\,r_j\ge1 ,
\end{align*}
which after normalization is a good approximation of $\rho_{GHZ}$
for large enough $a$.

We finally note that unlike the Gibbs state of the previous section,
a GHZ state is an extremely ``fragile'' state, and even a small
amount of noise can cause its reduced density matrix to drift away
from a local Gibbs state. We see this effect in the following
subsection, where we present quantum hardware HLT results of this
state.

\subsubsection*{Quantum hardware results}
\begin{figure}
\label{fig:ghz-results} \centering
  \includegraphics[width=0.49\textwidth]{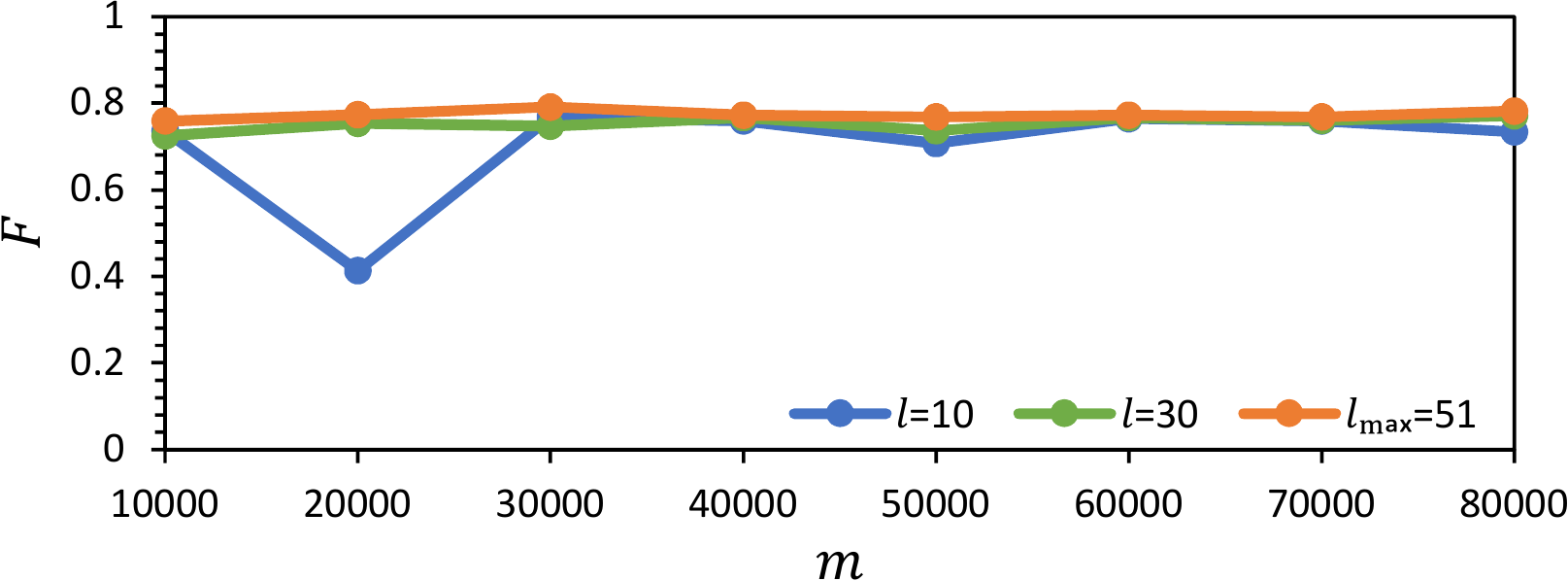}\\
  \includegraphics[width=0.49\textwidth]{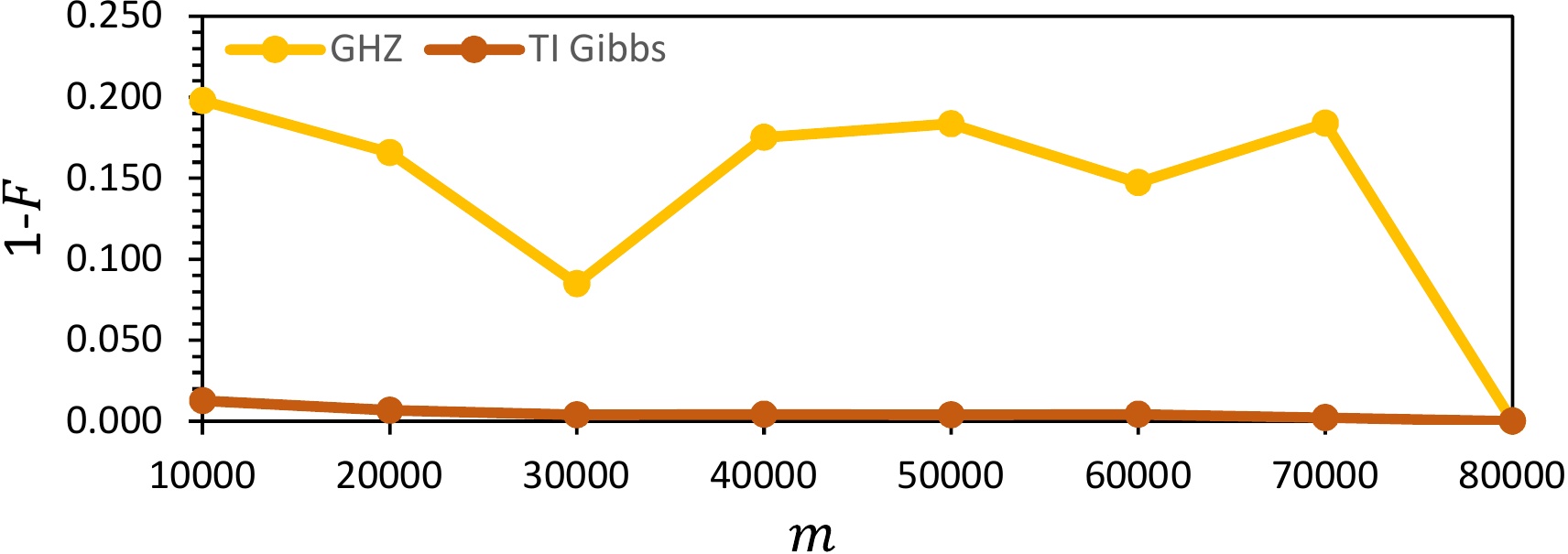}

  \caption[Caption for LOF]{Quantum hardware results for the GHZ case. The
    plots show the HLT performance on the reduced density matrix of $5$
    out of $6$ qubits in a GHZ state prepared on the $ibmq\_mumbai$
    27-qubits backend. Top: fidelity between the
    QST and the HLT reconstructed states. QST used $8,192$ measurements
    in each of the $3^5$ Pauli bases. HLT used between $10^4$ to
    $8\times 10^4$ measurements and different numbers of SVs as described in
    the legend. Bottom: convergence of infidelity for the
    GHZ states (yellow), with the transverse Ising Gibbs states shown as
    reference (brown). Points show infidelity between the HLT-reconstructed
    state with $8\times 10^4$ measurements and the HLT-reconstructed states
    using $m$ measurements, both with $l_{max}=51$ SVs. }
\end{figure}

Using the \emph{ibmq\_mumbai} 27-qubits backend we tested HLT on GHZ
states on quantum hardware for a $5$-qubits reduced density matrix out
of $6$ qubits~\footnote{The experiment was conducted on the device
\emph{ibmq\_mumbai} on 26/08/2021.}. The GHZ state itself was created
using a standard low-depth circuit that contained a Hadamard gate and 5
CNOT gates. Figure~\ref{fig:ghz-results} (top) shows the fidelity between
HLT-reconstructed states and a high-precision QST state as a
function of the number of HLT measurements. The HLT states were
calculated with $l=10,20,30$ SVs and $m$ varying from $10^4$ to $8\times 10^4$
measurements. The QST state was calculated using $8,192$
measurements on each of the $3^5$ different QST bases (a total of
almost two million measurements). The same measurement results were
used for both QST and HLT (with proper dilution) in order to
eliminate bias due to temporal fluctuations of device parameters.
Following Table~\ref{tab:qst-fidelity} in \App{sec:appendix-QST}, we
estimate the QST fidelity in this case to be more than $0.99$.

As clearly shown in the top panel of \Fig{fig:ghz-results}, the
fidelity between the HLT state and the high-precision QST state is
below $0.8$. This is in sharp contrast with the Gibbs state of the
transverse Ising model Hamiltonian shown in
Fig.~\hyperref[fig:hw-results-gibbs-cutoff]{4a}, where the HLT fidelity with
$l=10,30$ quickly went above $0.95$.

As mentioned above, the relative failure of $k=2$ HLT can be
attributed to the hardware noise, which, together with the fragility
of the GHZ state, seems to introduce more $3$-local terms (or
higher) to the Gibbs Hamiltonian. A possible solution  could be
increasing the locality of the Gibbs Hamiltonian in the method from
$k=2$ to $k=3$ or $k=4$, which is however beyond the scope of the current work.

At this point, it is important to understand if it is possible to identify the
failure of the $k=2$ HLT \emph{without} relying on the high-precision QST
results. Following the discussion in \Sec{subsec:QA}, we compared the average
local fidelity of HLT state with $l=l_{max}$ and $m=8\times 10^4$ with the local
QST state. Also here, the fidelity never exceeded $0.91$, indicating
a low global fidelity. Additionally, in \Fig{fig:ghz-results}
(bottom) we studied the \emph{convergence} of HLT on GHZ states as a
function of the number of measurements. The plot shows the
infidelity of the HLT state with $m$ measurements and $l=l_{max}$
with the HLT state that was obtained with the maximal $m=8\times
10^4$ measurements and the same $l$. The figure (yellow line) clearly shows
strong fluctuations in terms of $m$, which is yet another strong indication
for the failure of the $k=2$ HLT. For comparison, we have plotted
the same calculation for the previous case of the VarQITE generated
Gibbs state (brown line). Unlike the Gibbs state, here a clear
convergence is observed as soon as $m>2\times 10^4$.

\section{Discussion and Outlook}\label{sec:discussion-and-outlook}

We suggest \emph{Hamiltonian Learning Tomography}, a general
tomography method for Gibbs state with local Hamiltonian,
which uses moderate computational and quantum resources.
It exploits Hamiltonian learning based on a constraint matrix for observables,
to get a parametrized ansatz for the Gibbs Hamiltonian composed of the
constraint matrix's singular values. The ansatz is then optimized
with respect to local measurement results, similar to EHT.

The use of the Hamiltonian learning
algorithm to generate an ansatz removes the need for an
a-priori knowledge of the Hamiltonian that parametrizes
the state, as relied upon in EHT.
We have shown in multiple examples that the ansatz requires a very
low number of measurements, in comparison with common QST methods,
for reaching a high tomography precision.
Furthermore, using an optimization method
circumvents the need for a gap in the constraint matrix and furthermore, the
normalization of the state, which is enforced by the
optimization method, provides the multiplicative factor of the
Hamiltonian.

We implemented HLT and demonstrated it on states with a local GH,
both in simulations and on IBM Quantum superconducting devices.
At first, we verified HLT for $5$-qubit transverse-field Ising Gibbs state
using high-precision QST.
Then, the density matrix of states with $4$ to $10$ qubits
on the quantum device were reconstructed with high fidelities, using a very low
number of measurements in comparison with common QST methods.
We have also provided two indicators for the performance of HLT that
can be assessed from the experiment data. The first indicator is obtained by
performing local QST for feasible sizes of qubit subsystems, and the second is
the convergence of the HLT states. Both indicate that HLT did not perform
well on the GHZ states created with the quantum device, due to a breaking
of the locality of the Gibbs Hamiltonian.

Considering possible extensions of our results, we here focused on 1D Gibbs
Hamiltonians with $k=2$ locality, but HLT can be
easily implemented on systems with higher $k$ or spatial dimensions.
We note that for the optimizations we ran, we have to be able to hold
the entire $N$-qubits density matrix in the computer memory and
manipulate it, making it suitable for intermediate sized systems of
$2$---$14$ qubits. This is the main bottleneck of the method
for scaling to a large number of qubits. An interesting direction
for future research would be to lift this restriction, for example, either by
some form of efficient exponentiation (i.e.~low-order polynomial
expansions for high-temperatures), or by using tensor
networks\cc{torlai2020quantum}, or both\cc{vanhecke2021simulating}. The
efficiency of the presented tomography method in terms of the number of
measurements, and the fact that it is easy to identify based on the experiment
results states that fall outside of its scope, make it a useful tool for the
verification of quantum states that are hard to characterize.

\section{Acknowledgments}\label{sec:acknowledgments}

We thank Raz Firanko and Netanel Lindner for enlightening
discussions. IA acknowledges the support of the Israel Science
Foundation (ISF) under the Research Grants in Quantum Technologies
and Science No.~2074/19, and the joint NRF-ISF Research Grant
No.~3528/20.

\bibliography{main}
\bibliographystyle{apsrev4-2}


\appendix


\section{GH Reconstruction Error Estimation}
\label{sec:appendix-error-estimation}

In this appendix we derive the bound on the reconstruction error of
the Gibbs Hamiltonian given in~\eqref{eq:perturbation-bound}.
This bound reveals how the reconstruction error
decays with $l$ and therefore justifies using the \emph{lowest}
$l$ singular vectors of the CM in the
ansatz.

\begin{theorem}
\label{thm:perturb} Let $K$ be the exact constraint matrix of some
  Gibbs state $\rho$, and let $\tilde{K}=K+\epsilon E$ be the
  empirical constraint matrix, where $E$ is a random matrix with
  elements that are independent and identically distributed random
  normal variables and $\epsilon \ll 1$. Let $\bm{v}_0$ be the
  singular value of $K$ with the lowest singular value, and let
  $\tilde{\bm{c}}_l$ be the projection of $\bm{v}_0$ on the subspace
  of the first $l$ singular vectors of $\tilde{K}$. Finally, let
  $\lambda_i$ be the $i$'th singular value of $K$, sorted in
  ascending order.  Then
  \begin{equation}
  \label{eq:perturbation-bound-theorem}
    \mathbb{E}\big(\norm{\tilde{\bm{c}}_l-\bm{v}_0}^2\big)
    = \epsilon^2\sum_{i\ge l} \frac{1}{\lambda^2_i}
      +  \orderof{\epsilon^3},
  \end{equation}
  where the averaging $\mathbb{E}(\cdot)$ is defined with respect to
  the underlying probability space in which $E$ is defined.
\end{theorem}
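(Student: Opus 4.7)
The plan is to reduce the claim to standard non-degenerate first-order perturbation theory for the eigenvectors of the positive semidefinite matrix $K^TK$, followed by a Gaussian moment computation. First, I would use that $\{\tilde{\bm{v}}_i\}$ is orthonormal and $\norm{\bm{v}_0}=1$ to rewrite
\begin{align*}
  \norm{\tilde{\bm{c}}_l-\bm{v}_0}^2
  = 1-\sum_{i<l}\bigl|\tilde{\bm{v}}_i^T\bm{v}_0\bigr|^2
  = \sum_{i\ge l}\bigl|\tilde{\bm{v}}_i^T\bm{v}_0\bigr|^2,
\end{align*}
so it suffices to evaluate $\mathbb{E}\bigl[|\tilde{\bm{v}}_i^T\bm{v}_0|^2\bigr]$ to leading order in $\epsilon$ for each $i\ge l\ge 1$.

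Second, I would expand $\tilde{K}^T\tilde{K}=K^TK+\epsilon\bigl(K^TE+E^TK\bigr)+\epsilon^2 E^TE$ and apply first-order perturbation theory to the eigenvector $\bm{v}_i$ of $K^TK$ with eigenvalue $\lambda_i^2$. Because $K\bm{v}_0=0$, the contribution of $K^TE$ to the off-diagonal matrix element coupling $\bm{v}_i$ to $\bm{v}_0$ vanishes, while the $E^TK$ contribution reduces to $\lambda_i\,\bm{u}_i^T E\bm{v}_0$, with $\bm{u}_i$ the left singular vector paired to $\lambda_i$. Dividing by the eigenvalue gap $\lambda_i^2-0=\lambda_i^2$, this yields
\begin{align*}
  \tilde{\bm{v}}_i^T\bm{v}_0
  = \frac{\epsilon}{\lambda_i}\,\bm{u}_i^T E\bm{v}_0 + \orderof{\epsilon^2},
  \qquad i\ge 1.
\end{align*}

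Third, I would square and take expectations. The scalar $\bm{u}_i^TE\bm{v}_0$ is a linear combination of the i.i.d.\ normal entries of $E$ with squared coefficients summing to $\norm{\bm{u}_i}^2\norm{\bm{v}_0}^2=1$, hence is itself a standard normal with $\mathbb{E}\bigl[(\bm{u}_i^TE\bm{v}_0)^2\bigr]=1$. Summing over $i\ge l$ gives~\eqref{eq:perturbation-bound-theorem}; the cross terms between the leading and sub-leading pieces are cubic in $E$ and so vanish in expectation by parity of Gaussian moments, comfortably fitting in the $\orderof{\epsilon^3}$ remainder.

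The main obstacle will be a careful treatment of potential degeneracies in the positive spectrum of $K$ and of the second-order perturbative remainder. Degeneracies can be handled by standard degenerate perturbation theory within each eigenspace of $K^TK$: since one only needs the projection onto the single non-degenerate direction $\bm{v}_0$ (the assumed gap at $\lambda_0=0$ is what matters), an appropriate choice of basis inside each degenerate block leaves the formula above intact. The remainder bound can then be derived by elementary operator-norm estimates on $\epsilon E$, whose moments are tightly controlled for i.i.d.\ Gaussian matrices, yielding the uniform $\orderof{\epsilon^3}$ error stated in the theorem.
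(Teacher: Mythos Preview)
Your proposal is correct and follows essentially the same route as the paper: rewrite the error as $\sum_{i\ge l}|\tilde{\bm v}_i^T\bm v_0|^2$, apply first-order perturbation theory to $K^TK$ with perturbation $\epsilon(K^TE+E^TK)+\epsilon^2E^TE$, use $K\bm v_0=0$ and $K\bm v_i=\lambda_i\bm u_i$ to reduce the relevant off-diagonal element to $\epsilon\lambda_i^{-1}\bm u_i^TE\bm v_0$, and finish with the Gaussian variance computation $\mathbb{E}[(\bm u_i^TE\bm v_0)^2]=1$. Your explicit remarks that the $\orderof{\epsilon^3}$ cross terms vanish by odd Gaussian parity and that degeneracies among the positive $\lambda_i$ are harmless (only the gap at $\lambda_0=0$ matters) are welcome additions that the paper leaves implicit.
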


\begin{proof}
  To simplify the
  presentation, we move to Dirac notation and denote the right
  singular vectors of $K,\tilde{K}$ by
  $\{\ket{v_i}\},\{\ket{\tilde{v}_i}\}$ respectively. These should
  not be confused with the underlying quantum states. Now
  $\ket{v_0}$ is the lowest right singular vector of $K$ with a
  corresponding zero singular value that corresponds to the exact
  Gibbs Hamiltonian. Up to an overall factor, the GH that we
  reconstruct from $\tilde{K}$ using its lowest $l$
  singular vectors corresponds to the projection of $\ket{v_0}$ on
  the subspace spanned by $\{\ket{\tilde{v}_i}\}_{i=1}^l$:
  \begin{align}
    \ket{\tilde{c}_l} \EqDef
      \sum_{i=0}^{l-1} \braket{\tilde{v}_i}{v_0}\ket{\tilde{v}_i}.
  \end{align}
  Therefore, $\ket{v_0} -\ket{\tilde{c}_l} = \sum_{i\ge
  l}\braket{\tilde{v}_i}{v_0}\ket{\tilde{v}_i}$ and so
  \begin{align}
  \label{eq:delta-v}
    \norm{\ket{v_0} -\ket{\tilde{c}_l}}^2
      = \sum_{i\ge l}|\braket{\tilde{v}_i}{v_0}|^2.
  \end{align}
  We estimate the above sum using first-order perturbation theory.
  By definition, $\tilde{K}=K+\epsilon E$, and therefore,
  \begin{align}
    \tilde{K}^T\tilde{K} &= (K+\epsilon E)^T(K+\epsilon E)\nonumber\\
      &= K^T K+\epsilon(E^T K + K^T E)+\epsilon^2 E^T E.
  \end{align}
  Treating $K^T K$ as an unperturbated Hamiltonian and $\epsilon(
  E^T K + K^T E + \epsilon E^T E)$ as a perturbation, we use
  first-order perturbation theory to estimate
  $\braket{\tilde{v}_i}{v_0}$. We note that if $(\ket{v_i},
  \lambda_i)$ are a pair of right singular-vector and singular value
  of $K$, then $(\ket{v_i}, \lambda_i^2)$ are pair of eigenvector
  and eigenvalue of $K^T K$. This allows us to apply standard
  perturbation theory calculation\cc{schrodinger1926quantization}
  for $\ket{\tilde{v}_i}$:
  \begin{align}
  \label{eq:perturbation_clean}
    \ket{\tilde{v}_i} =\ket{v_i}
      - \epsilon\sum_{j\neq i}\ket{v_j}
      \frac{\bra{v_j}E^T K + K^T E \ket{v_i}}
          {\lambda_j^2-\lambda^2_i} + \orderof{\epsilon^2}.
  \end{align}
  Multiplying by $\bra{v_0}$ and using the fact that $\lambda_0=0$,
  for any $i>0$ we get
  \begin{align*}
    \braket{v_0}{\tilde{v}_i}
      &= \epsilon \frac{\bra{v_0}E^T K
        + K^T E \ket{v_i}}{\lambda^2_i} + \orderof{\epsilon^2} \\
      &=\epsilon \frac{\bra{v_0}E^T\ket{u_i}}{\lambda_i}
        + \orderof{\epsilon^2},
  \end{align*}
  where $\ket{u_i}$ is the $i$'th left singular vector of $K$ and we
  used $\bra{v_0}K^T = 0$ and $K\ket{v_i} = \lambda_i\ket{u_i}$.
  Plugging this back to \Eq{eq:delta-v}, we get
  \begin{align*}
    \norm{ \ket{v_0}-\ket{\tilde{c}_l}}^2
      = \epsilon^2 \sum_{i\ge l}
        \frac{|\bra{v_0}E^T\ket{u_i}|^2}{\lambda^2_i}
          + O(\epsilon^3).
  \end{align*}
  To proceed, we would like to take the average of the above
  equation with respect to the i.i.d random variables that define
  the entries of $E$. Expanding $\ket{v_0}, \ket{u_i}$ in the
  standard basis, we get
  \begin{align*}
    \mathbb{E}&\big(|\bra{v_0}E^T\ket{u_i}|^2\big)
     = \mathbb{E}\big(\bra{v_0} E^T \ket{u_i}
       \bra{u_i}E\ket{v_0}\big) \\
     &= \sum_{k,l,m,n} \braket{v_0}{k}\braket{l}{u_i}
       \braket{u_i}{m}\braket{n}{v_0} \mathbb{E}\big(
         E_{lk}E_{mn}\big).
  \end{align*}
  Then using the fact that the entries of $E$ are i.i.d.\ normal
  random variables, we conclude that $\mathbb{E}\big(
  E_{lk}E_{mn}\big) = \delta_{lm}\cdot\delta_{kn}$, and so
  \begin{align*}
    \mathbb{E}\big(|\bra{v_0}E^T\ket{u_i}|^2\big)
     &= \sum_{k,l} |\braket{v_0}{k}|^2 \cdot
       |\braket{l}{u_i}|^2 \\
    &= \norm{\ket{v_0}}^2\cdot\norm{\ket{u_i}}^2=1.
  \end{align*}
  Overall then,
  \begin{align*}
    \mathbb{E}\big(\norm{ \ket{v_0}-\ket{\tilde{c}_l}}^2\big)
      = \epsilon^2 \sum_{i\ge l}\frac{1}{\lambda^2_i}
          + O(\epsilon^3) ,
  \end{align*}
  which proves the desired bound.
\end{proof}

Theorem~\ref{thm:perturb} uses first-order perturbation theory to
prove \Eq{eq:perturbation-bound-theorem}. Since to zero order the
singular values of $K$ and $\tilde{K}$ coincide,
$\lambda_i = \tilde{\lambda}_i + O(\epsilon)$, we can use the
empirical data to estimate the reconstruction error as
\begin{align}
\label{eq:perturb-empirical}
  \norm{\tilde{\bm{c}}_l-\bm{v}_0}
    \simeq \epsilon \left( \sum_{i\ge l}
      \frac{1}{\tilde{\lambda}^2_i}\right)^{1/2}
\end{align}
where $\epsilon = O(\frac{1}{\sqrt{m}})$, and $m$ is the number of
measurements used to estimate every entry of $\tilde{K}$. To verify
the applicability of this approximation, we performed numerical
simulations in which we calculated the LHS and the first-order RHS
of \Eq{eq:perturb-empirical}, and plotted the histogram of their
ratio for many random realizations. Specifically, for
each pair of values
$N\in\left\{5,6,7,8\right\},\epsilon\in\left\{ 10^{-2},10^{-3},10^{-4} \right\}$
we randomly picked ten normalized $2$-local
Hamiltonians on $N$ qubits and computed their CM $K$. Then, we added
to $K$ entries an i.i.d.\ Gaussian noise with amplitude $\epsilon$ to get
$\tilde{K}$, and calculated the projection of $\bm{v}_0$ on the
span of the lowest $l$ SVs of $\tilde{K}$,
for a wide range of $l$ values, in order to find $\bm{\tilde{c}}_l$.
The ratio between the LHS and the RHS of \Eq{eq:perturb-empirical}
was then calculated and averaged over all values of $N,\epsilon,l$.
The results are shown on \Fig{fig:error-estimation}
with an average ratio between LHS and RHS of $0.993$, which fits
\Eq{eq:perturb-empirical}.

\begin{figure}
    \label{fig:error-estimation} \centering
    \includegraphics[width=0.48\textwidth]{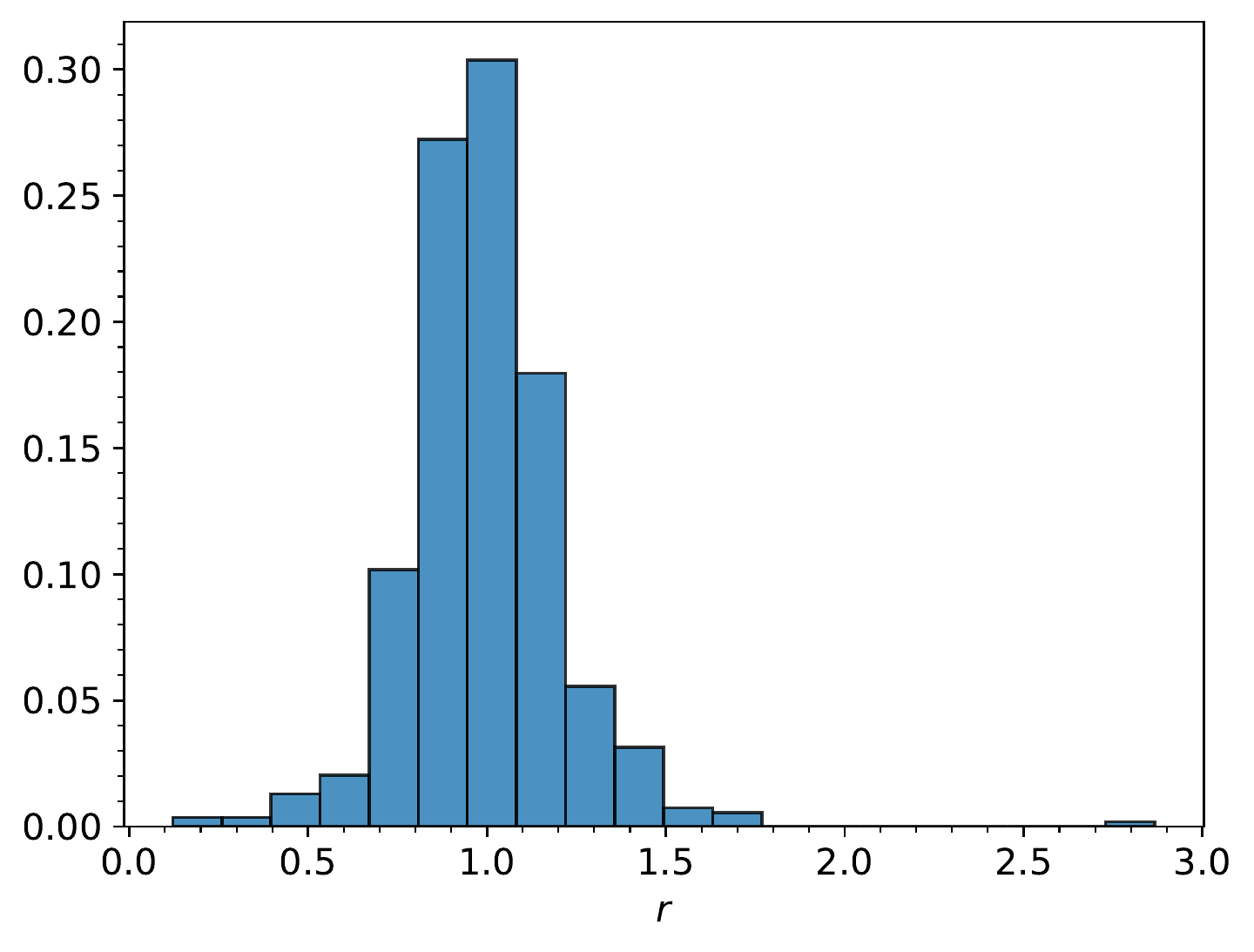}
    \caption{Histogram of the ratio between the
    reconstruction error for states with normalized random $2$-local GH
    and the error estimation from \Eq{eq:perturb-empirical}
    $r = \norm{\tilde{\bm{c}}_l-\bm{v}_0}
    / \left(\epsilon^2 \sum_{i\ge l}
      \frac{1}{\tilde{\lambda}^2_i}\right)^{1/2}$.
    System size was $N=5-8$ qubits and a gaussian noise with amplitude
    $\epsilon\in\left\{ 10^{-2},10^{-3},10^{-4} \right\}$
    was added to the CM to get $\tilde{K}$. Ten
    random normalized Hamiltonians were taken for each $N$ and $\epsilon$, and $r$
    was calculated for a wide range of $l$ values. The average $r$
    over $N,\epsilon,l$ was $0.993$.}
\end{figure}
\section{Gibbs State preparation with VarQITE}\label{sec:appendix-VarQITE}

Imaginary time evolution can be used for Gibbs state preparation.
Starting from maximally entangled state between sub-systems $A$ and
$B$, we evolve the system under the Hamiltonian $H_A\otimes I_B$ for
imaginary time $\beta/2$ and get the Gibbs state $e^{-\beta H}$ at
sub-system $A$.

In Refs.\cite{Yuan2019,Zoufal_2021} a variational algorithm for
imaginary time evolution, called VarQITE, is described. We followed
the variational circuit architecture described in\cc{Zoufal_2021}
and extended it from two qubits Gibbs state preparation up to
10-qubits, similar to \cRef{lehtonen_2021} (ring topology with $R_Y$
rotations). We ran the variational imaginary time evolution for the
transverse Ising model Hamiltonian (see \Eq{eq: transverse ising})
using 10 iterations. We used an imaginary time $1/2$ to get a final
state of $e^{-H}$. In order to avoid ill-conditioned linear
equations during the algorithm, we used ridge regularization, as
described in \cRef{Zoufal_2021}, with $\alpha=0.005$.  The algorithm
ran on the Qiskit simulator without noise. To assess the accuracy of
the algorithm (on simulations), we calculated the exact Gibbs state
using SciPy\cc{2020SciPy-NMeth} matrix exponentiation, and compared
it the output state of the noiseless variational circuit. The
fidelity between these two states is given in \Table{tab:gibbs
fidelity}.  Fidelities above $0.96$ were obtained for up to $6$
qubits and above $0.9$ up to $10$ qubits.

The actual states prepared on the quantum hardware differed
substantially from the intended Gibbs states (see
\App{sec:appendix-hardware-state-preparation}). Better results for
preparing the Gibbs state on the quantum hardware could probably be
obtained by performing the variational algorithm on the hardware,
instead of simulation, which would also account for noise in the
quantum hardware. However, running on quantum hardware would require
a large amount of quantum resources. For Gibbs state on $N$ qubits,
each algorithm iteration requires measuring $16N^2+4 N$ different
expectation values, each with different circuits. In addition, the
above circuits include $2N+1$ qubits and up to twice the number of
gates that are in the variational circuit. These long circuits would
have a large noise accumulation which will affect the algorithm
results. Moreover, if the optimization process takes too long,
drifting errors in the underlying quantum hardware might hinder the
quality of the final state. We leave this problem for future research.

\begin{table}
    \label{tab:gibbs fidelity}
    \centering
    \begin{tabular}{c|c}
        \toprule
        \textbf{\# of qubits} & \textbf{Fidelity} \\
        \midrule
        2                     & 0.9920            \\
        3                     & 0.9865            \\
        4                     & 0.9753            \\
        5                     & 0.9663            \\
        6                     & 0.9655            \\
        7                     & 0.9491            \\
        8                     & 0.9397            \\
        9                     & 0.9395            \\
        10                    & 0.9186            \\
        \bottomrule
    \end{tabular}
    \caption{Fidelity between exact 1D Ising model (\Eq{eq:
    transverse ising}) Gibbs state and the final state of the
    variational Gibbs state preparation algorithm, without noise. 10
    time iterations were made on simulation with a total (imaginary)
    time of $1/2$. The variational algorithm used ring architecture
    with parametrized $R_Y$ rotations before and after the ring (see
    \cRef{lehtonen_2021} for details).}
\end{table}

\section{Overlapping Local Tomography}
\label{sec:overlapping-local-tomography}

As described in \Sec{subsec:constraint-matrix}, the expectation
values that make up the entries of the constraint matrix are the
commutators of $k$-local Paulis (Hamiltonians basis $\left\{ S_m
\right\}$) with $k+1$-local Paulis (constraints, $\left\{ A_q
\right\}$). Therefore, these are the expectation values of
$2k$-local Paulis. In order to measure these expectation values
efficiently, we used a cyclic basis measurements protocol called
\emph{Overlapping Local Tomography}, which is describe in
\cRef{zubida2021optimized}. In our case, the $1D$ chain is divided
into cells of length $2k$, and each cell is measured using all
$3^{2k}$ measurement basis configurations. The measurement bases
are the same for each of the cells (see \Fig{fig:overlapping}).

\cRef{zubida2021optimized} also defines an extension of the protocol
for $D$ dimensional lattices, which can be used to apply the HLT
algorithm in higher dimensions.

\begin{figure}
    \label{fig:overlapping}
    \centering
    \includegraphics[width=0.3\textwidth]{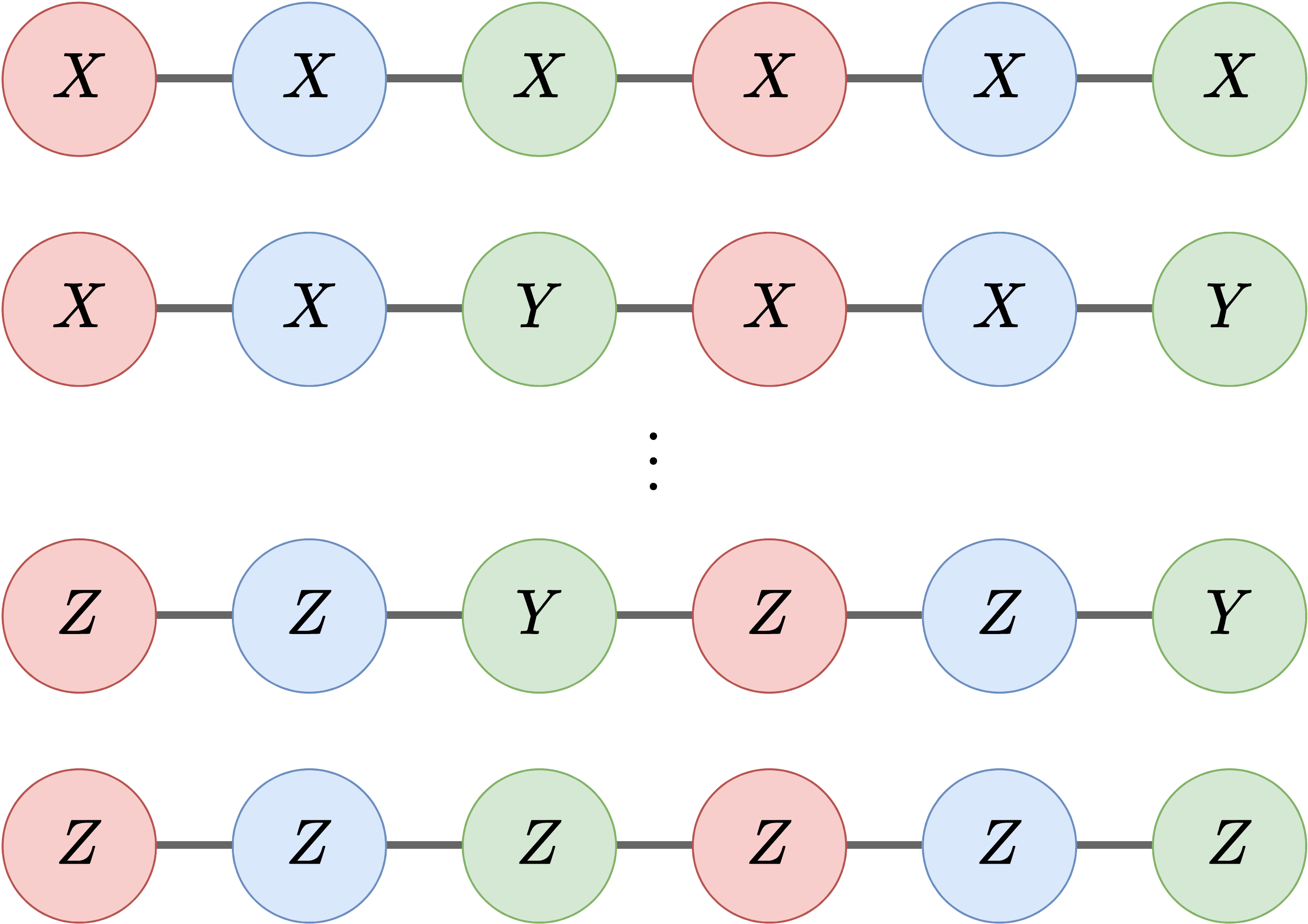}
    \caption{Visual description of \emph{Overlapping local tomography}
    measurement configurations for 1D chain with a unit length of $3$.
    Each consecutive $3$ qubits (different colors) form a unit cell which is
    measured in each of the $3^3$ basis configurations.}
\end{figure}

\section{Qiskit Quantum State Tomography}
\label{sec:appendix-QST}
\begin{table}[t]
    \label{tab:qst-fidelity}
    \centering
    \begin{tabular}{c|c|c|c}
        \toprule
        & \multicolumn{3}{c}{\textbf{Average Fidelity}} \\
        \cline{2-4}
        & \multicolumn{2}{c|}{\textbf{Gibbs}}    & \textbf{GHZ} \\
        \cline{1-4}
        & $m=8192$ & $m=20000$    & $m=8192$ \\
        \midrule
        N=2 & \quad 0.9994 \quad&\quad 0.9998  \quad &\quad 0.9975\quad       \\
        N=3 & \quad 0.9965 \quad&\quad 0.9987  \quad &\quad 0.9957\quad       \\
        N=4 & \quad 0.9882 \quad&\quad 0.9937  \quad &\quad 0.9952\quad       \\
        N=5 & \quad 0.9742 \quad&\quad 0.9853  \quad &\quad 0.9949\quad       \\
        \bottomrule
    \end{tabular}
    \caption{Average fidelity (over $10$ simulation runs) of Qiskit QST on
    variational Gibbs circuits for Transverse Ising
        (\Eq{eq: transverse ising}) and GHZ states.
        QST reconstructed a state on $N$ qubits with $m$ measurements in each
    of the $3^N$ circuits.}
\end{table}

This appendix summarizes the Qiskit QST algorithm, as described in
\cRef{QiskitQST}. Qiskit performs QST on $N$ qubits by measuring a
state $\rho$ in all of the $3^N$ Pauli bases. Each basis measurement
outcome can be represented as a projector $E_i$. Denoting $\vec{B}$
as the column vectorization of an operator $B$ (stacking $B$ columns
into a single column) and $p_i$ as the measured probabilities for
outcome $E_i$ gives the following linear equation:
\begin{align*}
E\vec{\rho}\EqDef\left(\begin{array}{c}
                             \vec{E}_{1}^{\dagger}\vec{\rho} \\
                             \vdots                          \\
                             \vec{E}_{2\cdot3^{N}}^{\dagger}\vec{\rho}
\end{array}\right)=\left(\begin{array}{c}
                             p_{1}  \\
                             \vdots \\
                             p_{2\cdot3^{N}}
\end{array}\right)=\vec{p}.
\end{align*}
This can be viewed as a least squares problem , whose solution is
the pseudo-inverse $\vec{\rho}=\left( E^T E \right)^{-1}E^T\vec{p}$~.
Due to finite statistics and quantum hardware noise, which affect
$\vec{p}$, the above solution may be non-physical, i.e.\ $\rho$
might contain negative eigenvalues. Qiskit addresses this issue by
using the algorithm described in \cRef{smolin2012efficient} which
computes the maximum-likelihood physical state for the results.

To assess the performance of Qiskit QST, which was used to verify the
HLT results on quantum hardware, we ran it on numerical simulations.
The simulations used the same variational circuits that created the
Gibbs states in \Sec{subsec:Variational Gibbs State}, and the same
number of measurements used on the quantum devices, i.e., $3^N$
different bases with $8192$ or $20000$ measurements each. We ran
each simulation for $10$ times and calculated the average fidelity
between the QST state and the exact state of the simulation. Results
are shown in \Table{tab:qst-fidelity}.  Evidently, for systems
of up to $5$ qubits, the average fidelity was above $0.985$ with
$20000$ measurements per basis. We also simulated Qiskit QST
on GHZ states and obtained fidelities above $0.99$ for up to $5$ qubits.

These results suggest that we can trust the fidelity estimates of
the HLT states with respect to the QST states on 5 qubits, as long
as the fidelity is below 0.98. Higher fidelities might not
necessarily represent the true fidelity of the HLT state with the
underlying quantum state.

\section{Quantum Hardware State Preparation}
\label{sec:appendix-hardware-state-preparation}
\begin{figure}
    \centering
    \includegraphics[width=0.5\textwidth]{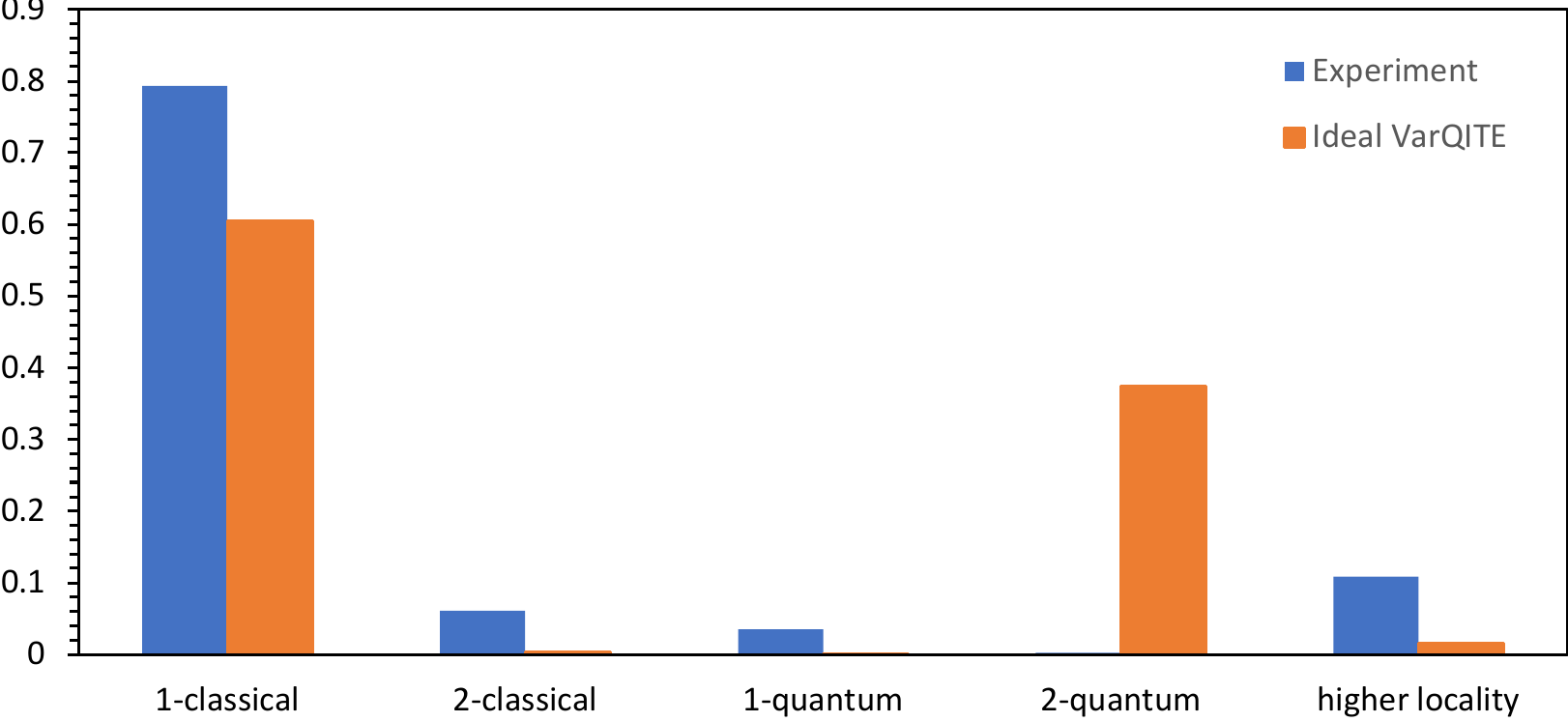}
    \\
    \vspace{10pt}
    \small
    \begin{tabular}{c|c|c|c}
        \toprule

        \text{5Q Experiment} & \text{}     &   \text{5Q Ideal VarQITE} & \text{}          \\
        \midrule
        IIIIZ       & \phantom{-}1.919 & IIIIZ       & 4.87185 \\
        IIZII       & \phantom{-}1.525 & ZIIII       & 4.85461 \\
        IZIII       & \phantom{-}1.150 & IIIZI       & 4.22767 \\
        IIIZI       & \phantom{-}0.472 & IZIII       & 4.14286 \\
        IYIII       & -0.232           & IIZII       & 4.13532 \\
        IIXII       & -0.234           & IXXII       & 4.11747 \\
        IIYII       & -0.240           & IIXXI       & 4.02958 \\
        IIIYI       & -0.242           & XXIII       & 3.75616 \\
        IXIII       & -0.263           & IIIXX       & 3.68091 \\
        IIIZZ       & -0.736           & IXZXI       & 0.92039 \\
        \bottomrule
    \end{tabular}
    \caption{Pauli decomposition of Gibbs Hamiltonians. GHs extracted
    from the QST-reconstructed state with $2\times 10^4$ measurements in each
    of the $3^5$ bases shown at
    Fig.~\hyperref[fig:hw-results-gibbs-cutoff]{4a} (blue), versus the
    ideal VarQITE circuit outcome without noise (orange).
    \textbf{Top}: $\left\| H
    \right\|_2^{-2} \sum_{p_i}\left( \Tr \left( H p_i
    \right)\right)^2$ where the sum is over different normalized Paulis
    for each column: 1-classical are single site $Z$, 2-classical are
    $ZZ$ on nearest neighbours, 1-quantum are single site $X$ or
    $Y$, 2-quantum are $2$-local which are not 2-classical, and
    higher locality are all the rest.
    \textbf{Bottom}: ten
    coefficients with the largest absolute values in the full Pauli
    decomposition (any range) of the GHs. }
    \label{fig:gibbs-preparation-a}
\end{figure}
\begin{figure}
    \centering
    \includegraphics[width=0.5\textwidth]{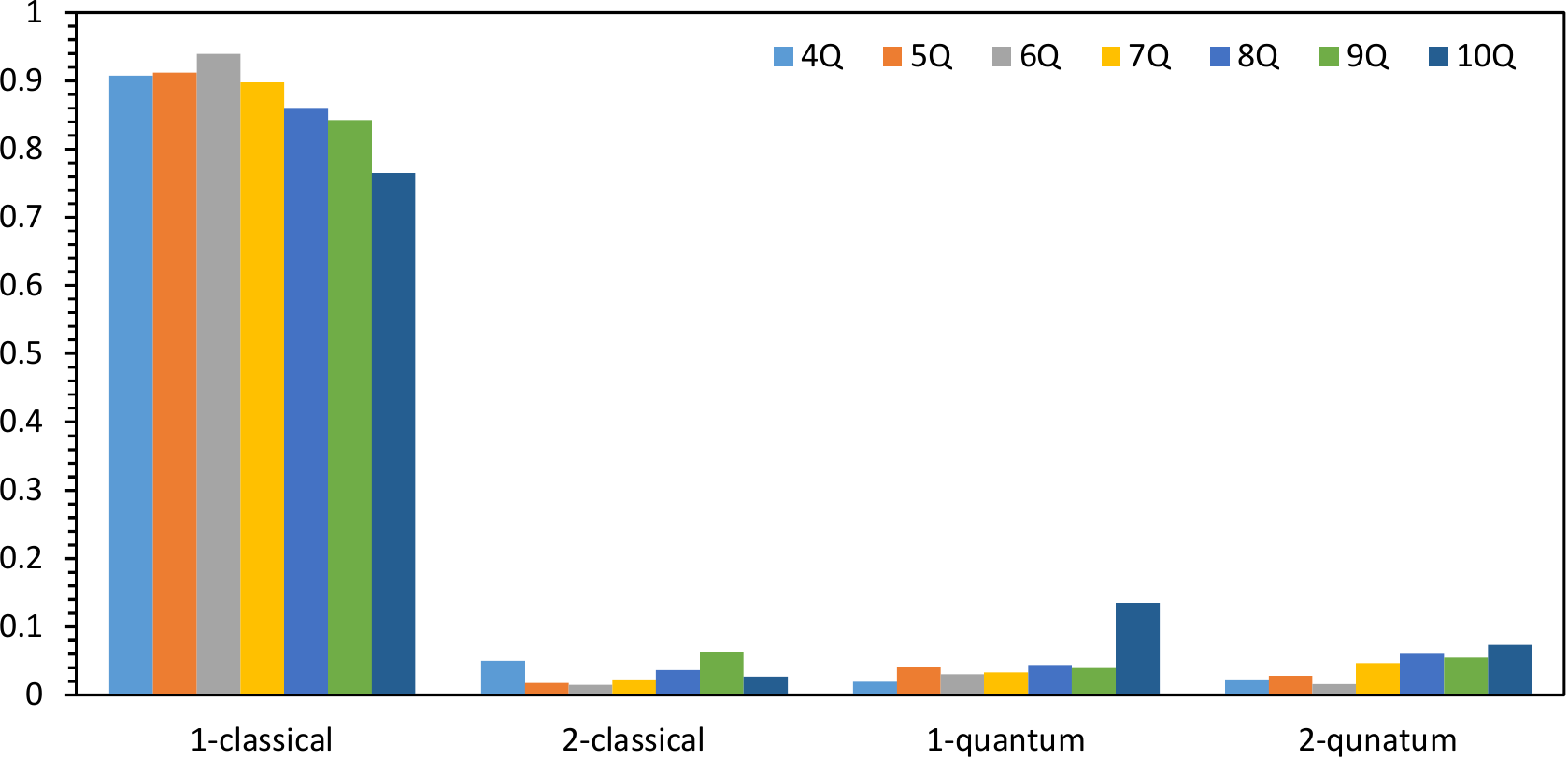}
    \\
    \centering
    \small
    \begin{tabular}{c|c|c|c|c|c|c|c|c|c|c|c|c|c}
          \toprule
          \textbf{4Q} & \textbf{}          & \textbf{5Q} & \textbf{}          & \textbf{6Q} & \textbf{}          & \textbf{7Q} & \textbf{}          \\
          \midrule
          IIIZ        & \phantom{-}1.389 & IIIIZ       & \phantom{-}1.869 & IIIIIZ      & \phantom{-}3.428 & IIIZIII     & \phantom{-}3.615 \\
          IIZI        & \phantom{-}0.982 & IIIZI       & \phantom{-}1.696 & IIZIII      & \phantom{-}2.337 & IIIIIIZ     & \phantom{-}3.255 \\
          ZIII        & \phantom{-}0.977 & IZIII       & \phantom{-}1.655 & IIIZII      & \phantom{-}2.300 & IIIIZII     & \phantom{-}3.031 \\
          IZII        & \phantom{-}0.471 & ZIIII       & \phantom{-}1.535 & ZIIIII      & \phantom{-}1.621 & IIIIIZI     & \phantom{-}2.864 \\
          IIYZ        & \phantom{-}0.141 & IIZII       & \phantom{-}1.521 & IIIIZI      & \phantom{-}1.083 & IIZIIII     & \phantom{-}2.469 \\
          XXII        & -0.122           & IIYII       & \phantom{-}0.522 & IZIIII      & \phantom{-}0.911 & IZIIIII     & \phantom{-}2.445 \\
          ZZII        & -0.149           & IIIIY       & \phantom{-}0.420 & IIXIII      & -0.353           & ZIIIIII     & \phantom{-}2.194 \\
          IZZI        & -0.155           & YZIII       & \phantom{-}0.300 & IIIIZZ      & -0.397           & IIIIIIY     & \phantom{-}0.847 \\
          IYII        & -0.224           & IIZZI       & -0.306           & XIIIII      & -0.453           & IXXIIII     & \phantom{-}0.702 \\
          IIZZ        & -0.425           & IZZII       & -0.353           & IIIIYI      & -0.485           & IIIIIZZ     & -1.015           \\
          \bottomrule
        \end{tabular}
    \\
    \begin{tabular}{c|c|c|c|c|c}
            \toprule
            \textbf{8Q} & \textbf{}         & \textbf{9Q} & \textbf{}         & \textbf{10Q} & \textbf{}         \\
            \midrule
            IIIIIIIZ    & \phantom{-}4.487 & IIIIIIIIZ   & \phantom{-}5.918 & IIIIIIIIIZ   & \phantom{-}10.410 \\
            IIIIIZII    & \phantom{-}4.450 & IIIIIIZII   & \phantom{-}5.581 & IIIIIIIZII   & \phantom{-}10.337 \\
            IIIZIIII    & \phantom{-}4.014 & IIIIZIIII   & \phantom{-}5.113 & IIIZIIIIII   & \phantom{0}\phantom{-}8.202  \\
            IIIIZIII    & \phantom{-}3.580 & IIIIIZIII   & \phantom{-}5.020 & IIIIIZIIII   & \phantom{0}\phantom{-}7.807  \\
            ZIIIIIII    & \phantom{-}3.559 & IIZIIIIII   & \phantom{-}3.782 & IIZIIIIIII   & \phantom{0}\phantom{-}6.035  \\
            IIIIIIZI    & \phantom{-}3.337 & IIIZIIIII   & \phantom{-}2.612 & IIIIIIIIZI   & \phantom{0}\phantom{-}5.465  \\
            IIZIIIII    & \phantom{-}2.690 & IZIIIIIII   & \phantom{-}2.515 & ZIIIIIIIII   & \phantom{0}\phantom{-}5.308  \\
            IZIIIIII    & \phantom{-}1.208 & IIIIIIIZI   & \phantom{-}2.095 & IIIIIIIZZI   & \phantom{0}-3.054            \\
            XIIIIIII    & -1.279           & ZIIIIIIII   & \phantom{-}1.721 & YIIIIIIIII   & \phantom{0}-4.206            \\
            ZZIIIIII    & -1.332           & IIIIIIIZZ   & -2.622           & XIIIIIIIII   & \phantom{0}-5.147            \\
            \bottomrule
        \end{tabular}
    \caption{Pauli decomposition of Gibbs Hamiltonians.
    GHs extracted from HLT results shown at
    \Fig{fig:partial-verification-results}.  \textbf{Top}: $\left\| H
    \right\|_2^{-2} \sum_{p_i}\left( \Tr \left( H p_i
    \right)\right)^2$ where the sum is over different normalized Paulis
    for each column: 1-classical are single site $Z$, 2-classical are
    $ZZ$ on nearest neighbours, 1-quantum are single site $X$ or
    $Y$, 2-quantum are $2$-local which are not 2-classical.
    HLT results with $k=2$ by definition does not have higher locality therefore
    it is not shown.
    \textbf{Bottom}: ten
    coefficients with the largest absolute values in the full Pauli
    decomposition (any range) of the GHs.}
    \label{fig:gibbs-preparation-b}
\end{figure}
\begin{figure}[b!]
    \centering
    \includegraphics[width=0.5\textwidth]{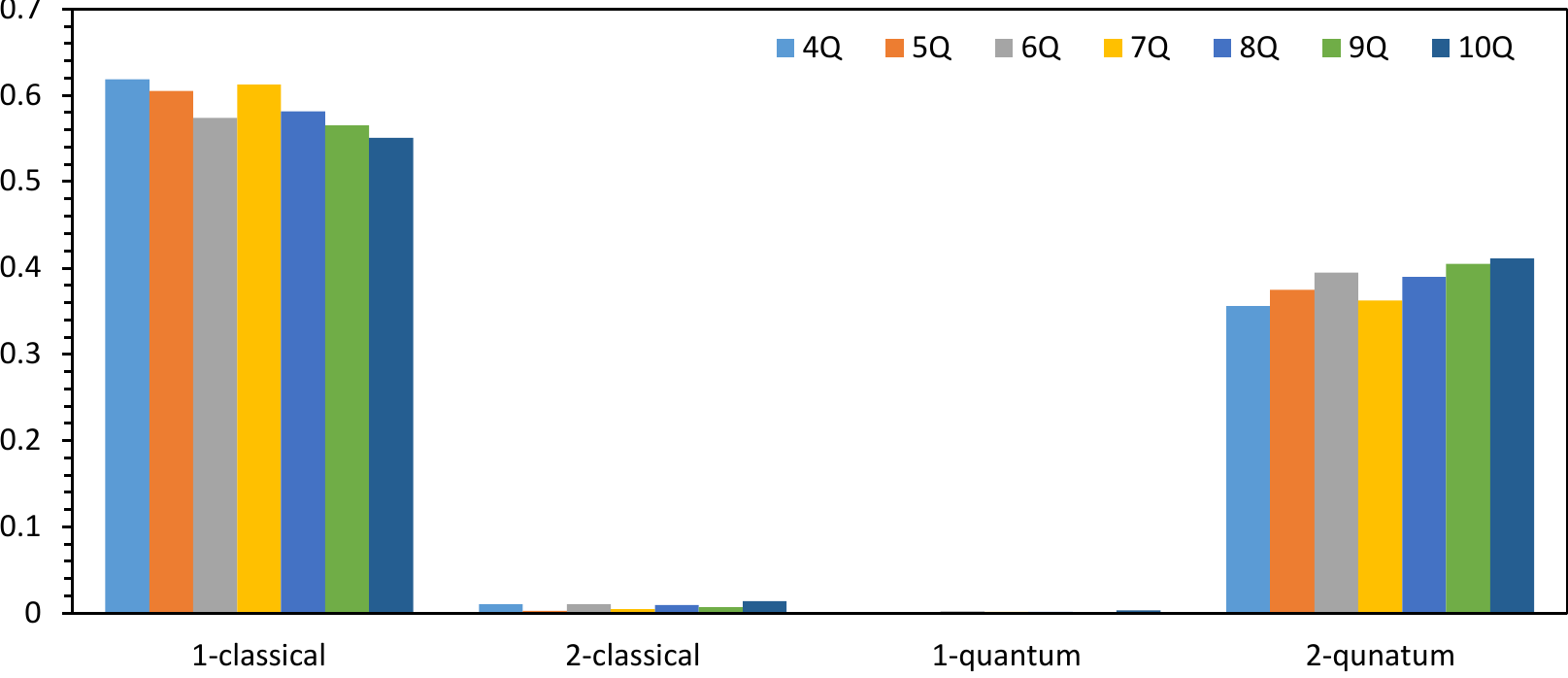}
    \\
    \centering
    \small
    \begin{tabular}{c|c|c|c|c|c|c|c|c|c|c|c|c|c}
          \toprule
          \textbf{4Q} & \textbf{}          & \textbf{5Q} & \textbf{}          & \textbf{6Q} & \textbf{}          & \textbf{7Q} & \textbf{}          \\
          \midrule
          ZIII        & \phantom{-}3.390 & IIIIZ       & 4.87185          & ZIIIII      & 6.915            & IIIIIIZ     & 10.480                      \\
          IIIZ        & \phantom{-}3.115 & ZIIII       & 4.85461          & IIIIIZ      & 6.765            & ZIIIIII     & 10.068                      \\
          IIZI        & \phantom{-}3.029 & IIIZI       & 4.22767          & IIZIII      & 6.277            & IIZIIII     & \phantom{1}9.325            \\
          IXXI        & \phantom{-}2.903 & IZIII       & 4.14286          & IIIIZI      & 6.163            & IIIIIZI     & \phantom{1}9.139            \\
          IZII        & \phantom{-}2.903 & IIZII       & 4.13532          & IIIXXI      & 6.039            & IZIIIII     & \phantom{1}8.651            \\
          XXII        & \phantom{-}2.687 & IXXII       & 4.11747          & IXXIII      & 5.844            & IIIIZII     & \phantom{1}8.616            \\
          IIXX        & \phantom{-}2.531 & IIXXI       & 4.02958          & IZIIII      & 5.795            & IIIXXII     & \phantom{1}8.438            \\
          IXZX        & \phantom{-}0.617 & XXIII       & 3.75616          & IIIZII      & 5.771            & IIIZIII     & \phantom{1}8.205            \\
          XZXI        & \phantom{-}0.607 & IIIXX       & 3.68091          & IIXXII      & 5.771            & IIIIIXX     & \phantom{1}7.852            \\
          IIZZ        & -0.526           & IXZXI       & 0.92039          & IIIIXX      & 5.393            & XXIIIII     & \phantom{1}7.772            \\
          \bottomrule
        \end{tabular}
    \\
    \begin{tabular}{c|c|c|c|c|c}
            \toprule
            \textbf{8Q} & \textbf{}         & \textbf{9Q} & \textbf{}         & \textbf{10Q} & \textbf{}         \\
            \midrule
            ZIIIIIII    & 14.527            & ZIIIIIIII   & 19.000            & IIIIIIIIIZ   & 27.529  \\
            IIIIIIIZ    & 13.417            & IIIIIIIIZ   & 18.212            & ZIIIIIIIII   & 26.464  \\
            IIIIIIZI    & 12.888            & IIIIIIZII   & 18.182            & IIIIIZIIII   & 25.502  \\
            IIIZIIII    & 12.319            & IIIIZIIII   & 18.003            & IIIZIIIIII   & 24.078  \\
            IIZIIIII    & 12.261            & IZIIIIIII   & 17.756            & IIIIZIIIII   & 24.029  \\
            XXIIIIII    & 12.061            & IIIIIZIII   & 17.471            & IIIXXIIIII   & 23.696  \\
            IIIXXIII    & 11.377            & IIZIIIIII   & 17.432            & IIIIIIZIII   & 23.303  \\
            IIIIIZII    & 11.368            & IIIZIIIII   & 17.265            & IIZIIIIIII   & 23.282  \\
            IZIIIIII    & 11.275            & IIIIIXXII   & 17.163            & IXXIIIIIII   & 22.232  \\
            IIIIXXII    & 11.014            & IIIIXXIII   & 17.015            & IIIIIIIIXX   & 22.227  \\
            \bottomrule
        \end{tabular}
    \caption{Pauli decomposition of Gibbs Hamiltonians.
    GHs extracted from Ising Model VarQITE circuits' ideal
    output --- without hardware noise
    (see \App{sec:appendix-VarQITE},
    and specifically \Table{tab:gibbs fidelity}).
    \textbf{Top}: $\left\| H
    \right\|_2^{-2} \sum_{p_i}\left( \Tr \left( H p_i
    \right)\right)^2$ where the sum is over different normalized Paulis
    for each column: 1-classical are single site $Z$, 2-classical are
    $ZZ$ on nearest neighbours, 1-quantum are single site $X$ or
    $Y$, 2-quantum are $2$-local which are not 2-classical.
    \textbf{Bottom}: ten
    coefficients with the largest absolute values in the full Paulis
    decomposition (any range) of the GHs.}
    \label{fig:gibbs-preparation-c}
\end{figure}
Noise within quantum hardware\cc{clerk2010introduction} causes
imperfect state preparation. Therefore, the Gibbs states prepared
by the algorithm from \App{sec:appendix-VarQITE} were not
exact.

We extracted the actual Gibbs Hamiltonians
prepared by the quantum hardware
using the density matrix from QST results and the matrix log
function of SciPy\cc{2020SciPy-NMeth}.
We calculated the Pauli decomposition of these Hamiltonians using
the trace inner product $\left< A,B \right>\EqDef\Tr\left( A^\dagger B
\right)$. The 10 coefficients with the largest absolute values in
the decompositions of the results from
Figs.~\hyperref[fig:hw-results-gibbs-cutoff]{4a},~\ref{fig:partial-verification-results}
are shown at
Figs.~\hyperref[fig:gibbs-preparation-a]{9 (left)},\ref{fig:gibbs-preparation-b}
respectively, and can be compared to the values of the ideal
VarQITE circuit, without hardware noise, in
Figs.~\hyperref[fig:gibbs-preparation-a]{9 (right)},\ref{fig:gibbs-preparation-c}
respectively.
It is noted that the classical part of the
Pauli decomposition (1-classical columns in
\Figss{fig:gibbs-preparation-a}{fig:gibbs-preparation-b}{fig:gibbs-preparation-c})
of the original Gibbs Hamiltonians was moderately changed
by the noise on the quantum hardware.
On the other hand,
the $XX$ terms did not remain as in the original Hamiltonians, and
their coefficients have changed drastically. Nevertheless, the
prepared Gibbs state was not entirely classical, as it still contained
non-negligible non-commuting terms.  Furthermore, the
Gibbs Hamiltonians remained almost $2$-local (as the original Gibbs
Hamiltonians), which explains the success of the HLT method.

GHZ states were also subject to noise during their preparations.
However, in this case, the noise led to highly non-local Gibbs
Hamiltonians, which is reflected in the non-locality of their Pauli
decomposition. Indeed, in the case of 5 qubits (obtained after
preparing a 6-qubit GHZ state and tracing over one qubit), the resultant
Gibbs Hamiltonian had a lot of weight on 3-local and higher terms.
Summing over all the contributions of the 2-local, nearest-neighbor
Pauli operators, we obtained
\begin{align*}
  \left\| H \right\|_2^{-2} \sum_{\text{2-local $P_s$}}\left( \Tr \left( H
    P_s  \right)\right)^2 < 0.6.
\end{align*}
In other words, over $40\%$ of the weight of the Gibbs Hamiltonian was
in non-local terms.
\end{document}